\documentclass{article}
\usepackage{graphicx,geometry,verbatim}
\usepackage{amssymb}
\usepackage{amsmath}
\usepackage{verbatim}
\usepackage{setspace}
\usepackage{fullpage}
\usepackage{float}

\floatstyle{ruled}
\newfloat{Algorithm}{thp}{lop}[section]

\newtheorem{lemma}{Lemma}
\newenvironment{proof}{\noindent\textbf{Proof}}{\hfill\qed}
\newtheorem{theorem}{Theorem}
\newtheorem{proposition}{Proposition}

\newtheorem{definition}{Definition}

\newcommand{\qed}{\hfill$\Box$}

\begin{document}

\title{
%Self-stabilization with Byzantine tolerance for global tasks
Bounding the Impact of Unbounded Attacks in Stabilization\protect\footnote{A preliminary version of this work appears in the proceedings of the 8th International Symposium on Stabilization, Safety, and Security of Distributed Systems (SSS'06), see \cite{MT06cb}.}}

\author{Swan Dubois\protect\footnote{LIP6, Universit\'e Pierre et Marie Curie \& INRIA, France, swan.dubois@lip6.fr} \and Toshimitsu Masuzawa\protect\footnote{Osaka University, Japan, masuzawa@ist.osaka-u.ac.jp} \and S\'{e}bastien Tixeuil\protect\footnote{LIP6, Universit\'e Pierre et Marie Curie \& INRIA, France, sebastien.tixeuil@lip6.fr}}

\date{}

\maketitle

\begin{abstract}
Self-stabilization is a versatile approach to fault-tolerance since it permits a distributed system to recover from any transient fault that arbitrarily corrupts the contents of all memories in the system. Byzantine tolerance is an attractive feature of distributed systems that permits to cope with arbitrary malicious behaviors. Combining these two properties proved difficult: it is impossible to contain the spatial impact of Byzantine nodes in a self-stabilizing context for global tasks such as tree orientation and tree construction.

We present and illustrate a new concept of Byzantine containment in stabilization. Our property, called \emph{Strong Stabilization} enables to contain the impact of Byzantine nodes if they actually perform too many Byzantine actions. We derive impossibility results for strong stabilization and present strongly stabilizing protocols for tree orientation and tree construction that are optimal with respect to the number of Byzantine nodes that can be tolerated in a self-stabilizing context.
\end{abstract}

\paragraph{Keywords}
Byzantine fault, Distributed algorithm, Fault tolerance, Stabilization, Spanning tree construction

\section{Introduction}

The advent of ubiquitous large-scale distributed systems advocates that tolerance to various kinds of faults and hazards must be included from the very early design of such systems. \emph{Self-stabilization}~\cite{D74j,D00b,T09bc} is a versatile technique that permits forward recovery from any kind of \emph{transient} faults, while \emph{Byzantine Fault-tolerance}~\cite{LSP82j} is traditionally used to mask the effect of a limited number of \emph{malicious} faults. Making distributed systems tolerant to both transient and malicious faults is appealing yet proved difficult~\cite{DW04j,DD05c,NA02c} as impossibility results are expected in many cases.

Two main paths have been followed to study the impact of Byzantine faults in the context of self-stabilization:
\begin{enumerate}
\item \emph{Byzantine fault masking.} In completely connected synchronous systems, one of the most studied problems in the context of self-stabilization with Byzantine faults is that of \emph{clock synchronization}. In~\cite{BDH08c,DW04j}, probabilistic self-stabilizing protocols were proposed for up to one third of Byzantine processors, while in \cite{DH07cb,HDD06c} deterministic solutions tolerate up to one fourth and one third of Byzantine processors, respectively.

\item \emph{Byzantine containment.} For \emph{local} tasks (\emph{i.e.} tasks whose correctness can be checked locally, such as vertex coloring, link coloring, or dining philosophers), the notion of \emph{strict} stabilization was proposed~\cite{NA02c,SOM05c,MT07j}. Strict stabilization guarantees that there exists a \emph{containment radius} outside which the effect of permanent faults is masked. In \cite{NA02c}, the authors show that this Byzantine containment scheme is possible only for \emph{local} tasks. As many problems are not local, it turns out that it is impossible to provide strict stabilization for those.
\end{enumerate}

\paragraph{Our Contribution}
In this paper, we investigate the possibility of Byzantine containment in a self-stabilizing setting for tasks that are global (\emph{i.e.} for with there exists a causality chain of size $r$, where $r$ depends on $n$ the size of the network), and focus on two global problems, namely tree orientation and tree construction. As strict stabilization is impossible with such global tasks, we weaken the containment constraint by limiting the number of times that correct processes can be disturbed by Byzantine ones. Recall that strict stabilization requires that processes beyond the containment radius eventually achieve their desired behavior and are never disturbed by Byzantine processes afterwards. We relax this requirement in the following sense: we allow these correct processes beyond the containment radius to be disturbed by Byzantine processes, but only a limited number of times, even if Byzantine nodes take an infinite number of actions.

The main contribution of this paper is to present new possibility results for containing the influence of unbounded Byzantine behaviors. In more details, we define the notion of \emph{strong stabilization} as the novel form of the containment and introduce \emph{disruption times} to quantify the quality of the containment. The notion of strong stabilization is weaker than the strict stabilization but is stronger than the classical notion of self-stabilization (\emph{i.e.} every strongly stabilizing protocol is self-stabilizing, but not necessarily strictly stabilizing). While strict stabilization aims at tolerating an unbounded number of Byzantine processes, we explicitly refer the number of Byzantine processes to be tolerated. A self-stabilizing protocol is $(t,c,f)$-strongly stabilizing if the subsystem consisting of processes more than $c$ hops away from any Byzantine process is disturbed at most $t$ times in a distributed system with at most $f$ Byzantine processes. Here $c$ denotes the containment radius and $t$ denotes the disruption time.

To demonstrate the possibility and effectiveness of our notion of  strong stabilization, we consider \emph{tree construction} and \emph{tree orientation}. It is shown in \cite{NA02c} that there exists no strictly stabilizing protocol with a constant containment radius for these problems. The impossibility result can be extended even when the number of Byzantine processes is upper bounded (by one). In this paper, we provide a $(f\Delta^d, 0, f)$-strongly stabilizing protocol for rooted tree construction, provided that correct processes remain connected, where $n$ (respectively $f$) is the number of processes (respectively Byzantine processes) and $d$ is the diameter of the subsystem consisting of all correct processes. The containment radius of $0$ is obviously optimal. We show that the problem of tree orientation has no constant bound for the containment radius in a tree with two Byzantine processes even when we allow processes beyond the containment radius to be disturbed a finite number of times. Then we consider the case of a single Byzantine process and present a $(\Delta,0,1)$-strongly stabilizing protocol for tree orientation, where $\Delta$ is the maximum degree of processes. The containment radius of $0$ is also optimal. Notice that each process does not need to know the number $f$ of Byzantine processes and that $f$ can be $n-1$ at the worst case.  In other words, the algorithm is adaptive in the sense that the disruption times depend on the actual number of Byzantine processes. Both algorithms are also optimal with respect to the number of tolerated Byzantine nodes.
 
\section{Preliminaries}\label{preliminaries}

\subsection{Distributed System}

A \emph{distributed system} $S=(P,L)$ consists of a set $P=\{v_1,v_2,\ldots,v_n\}$ of processes and a set $L$ of bidirectional communication links (simply called links). A link is an unordered pair of distinct processes. A distributed system $S$ can be regarded as a graph whose vertex set is $P$ and whose link set is $L$, so we use graph terminology to describe a distributed system $S$.

Processes $u$ and $v$ are called \emph{neighbors} if $(u,v)\in L$. The set of neighbors of a process $v$ is denoted by $N_v$, and its cardinality (the \emph{degree} of $v$) is denoted by $\Delta_v (=|N_v|)$. The degree $\Delta$ of a distributed system $S=(P,L)$ is defined as $\Delta = \max \{\Delta_v\ |\ v \in P\}$. We do not assume existence of a unique identifier for each process (that is, the system is anonymous). Instead we assume each process can distinguish its neighbors from each other by locally arranging them in some arbitrary order: the $k$-th neighbor of a process $v$ is denoted by $N_v(k)\ (1 \le k \le \Delta_v)$.

Processes can communicate with their neighbors through \emph{link registers}. For each pair of neighboring processes $u$ and $v$, there are two link registers $r_{u,v}$ and $r_{v,u}$. Message transmission from $u$ to $v$ is realized as follows: $u$ writes a message to link register $r_{u,v}$ and then $v$ reads it from $r_{u,v}$. The link register $r_{u,v}$ is called an \emph{output register} of $u$ and is called an \emph{input register} of $v$. The set of all output (respesctively input) registers of $u$ is denoted by $Out_u$ (respectively $In_u$), \emph{i.e.} $Out_u=\{r_{u,v}\ |\ v \in N_u\}$ and $In_u=\{r_{v,u}\ | v \in N_u\}$.

The variables that are maintained by processes denote process states. Similarly, the values of the variables stored in each link register denote the state of the registers. A process may take actions during the execution of the system. An action is simply a function that is executed in an atomic manner by the process. The actions executed by each process is described by a finite set of guarded actions of the form $\langle$guard$\rangle\longrightarrow\langle$statement$\rangle$. Each guard of process $u$ is a boolean expression involving the variables of $u$ and its input registers. Each statement of process $u$ is an update of its state and its output/input registers.

A global state of a distributed system is called a \emph{configuration} and is specified by a product of states of all processes and all link registers. We define $C$ to be the set of all possible configurations of a distributed system $S$. For a process set $R \subseteq P$ and two configurations $\rho$ and $\rho'$, we denote $\rho \stackrel{R}{\mapsto} \rho'$ when $\rho$ changes to $\rho'$ by executing an action of each process in $R$ simultaneously. Notice that $\rho$ and $\rho'$ can be different only in the states of processes in $R$ and the states of their output registers. For completeness of execution semantics, we should clarify the configuration resulting from simultaneous actions of neighboring processes. The action of a process depends only on its state at $\rho$ and the states of its input registers at $\rho$, and the result of the action reflects on the states of the process and its output registers at $\rho '$.

A \emph{schedule} of a distributed system is an infinite sequence of process sets.  Let $Q=R^1, R^2, \ldots$  be a schedule, where $R^i \subseteq P$ holds for each $i\ (i \ge 1)$. An infinite sequence of configurations $e=\rho_0,\rho_1,\ldots$ is called an \emph{execution} from an initial configuration $\rho_0$ by a schedule $Q$, if $e$ satisfies $\rho_{i-1} \stackrel{R^i}{\mapsto} \rho_i$ for each $i\ (i \ge 1)$. Process actions are executed atomically, and we also assume that a \emph{distributed daemon} schedules the actions of processes, \emph{i.e.} any subset of processes can simultaneously execute their actions. 

The set of all possible executions from $\rho_0\in C$ is denoted by $E_{\rho_0}$. The set of all possible executions is denoted by $E$, that is, $E=\bigcup_{\rho\in C}E_{\rho}$. We consider \emph{asynchronous} distributed systems where we can make no assumption on schedules except that any schedule is \emph{weakly fair}: every process is contained in infinite number of subsets appearing in any schedule.

In this paper, we consider (permanent) \emph{Byzantine faults}: a Byzantine process (\emph{i.e.} a Byzantine-faulty process) can make arbitrary behavior independently from its actions. If $v$ is a Byzantine process, $v$ can repeatedly change its variables and its out put registers arbitrarily.

In asynchronous distributed systems, time is usually measured by \emph{asynchronous rounds} (simply called \emph{rounds}). Let $e=\rho_0,\rho_1,\ldots$ be an execution by a schedule $Q=R^1,R^2,\ldots$. The first round of $e$ is defined to be the minimum prefix of $e$, $e'=\rho_0,\rho_1,\ldots,\rho_k$, such that $\bigcup_{i=1}^k R^i =P'$ where $P'$ is the set of correct processes of $P$. Round $t\ (t\ge 2)$ is defined recursively, by applying the above definition of the first round to $e''=\rho_k,\rho_{k+1},\ldots$. Intuitively, every correct process has a chance to update its state in every round.

\subsection{Self-Stabilizing Protocol Resilient to Byzantine Faults}

Problems considered in this paper are so-called \emph{static problems}, \emph{i.e.} they require the system to find static solutions. For example, the spanning-tree construction problem is a static problem, while the mutual exclusion problem is not. Some static problems can be defined by a \emph{specification predicate} (shortly, specification), $spec(v)$, for each process $v$: a configuration is a desired one (with a solution) if every process satisfies $spec(v)$. A specification $spec(v)$ is a boolean expression on variables of $P_v~(\subseteq P)$ where $P_v$ is the set of processes whose variables appear in $spec(v)$. The variables appearing in the specification are called \emph{output variables} (shortly, \emph{O-variables}). In what follows, we consider a static problem defined by specification $spec(v)$.

A \emph{self-stabilizing protocol} is a protocol that eventually reaches a \emph{legitimate configuration}, where $spec(v)$ holds at every process $v$, regardless of the initial configuration. Once it reaches a legitimate configuration, every process $v$ never changes its O-variables and always satisfies $spec(v)$. From this definition, a self-stabilizing protocol is expected to tolerate any number and any type of transient faults since it can eventually recover from any configuration affected by the transient faults. However, the recovery from any configuration is guaranteed only when every process correctly executes its action from the configuration, \emph{i.e.}, we do not consider existence of permanently faulty processes.

When (permanent) Byzantine processes exist, Byzantine processes may not satisfy $spec(v)$. In addition, correct processes near the Byzantine processes can be influenced and may be unable to satisfy $spec(v)$. Nesterenko and Arora~\cite{NA02c} define a \emph{strictly stabilizing protocol} as a self-stabilizing protocol resilient to unbounded number of Byzantine processes.

Given an integer $c$, a \emph{$c$-correct process} is a process defined as follows.

\begin{definition}[$c$-correct process] 
A process is $c$-correct if it is correct (\emph{i.e.} not Byzantine) and located at distance more than $c$ from any Byzantine process.
\end{definition}

\begin{definition}[$(c,f)$-containment]\label{def:cfcontained}
A configuration $\rho$ is \emph{$(c,f)$-contained} for specification $spec$ if, given at most $f$ Byzantine processes, in any execution starting from $\rho$, every $c$-correct process $v$ always satisfies $spec(v)$ and never changes its O-variables.
\end{definition}

The parameter $c$ of Definition~\ref{def:cfcontained} refers to the \emph{containment radius} defined in \cite{NA02c}. The parameter $f$ refers explicitly to the number of Byzantine processes, while \cite{NA02c} dealt with unbounded number of Byzantine faults (that is $f\in\{0\ldots n\}$).

\begin{definition}[$(c,f)$-strict stabilization]\label{def:cfstabilizing}
A protocol is \emph{$(c,f)$-strictly stabilizing} for specification $spec$ if, given at most $f$ Byzantine processes, any execution $e=\rho_0,\rho_1,\ldots$ contains a configuration $\rho_i$ that is $(c,f)$-contained for $spec$.
\end{definition}

An important limitation of the model of \cite{NA02c} is the notion of $r$-\emph{restrictive} specifications. Intuitively, a specification is $r$-restrictive if it prevents combinations of states that belong to two processes $u$ and $v$ that are at least $r$ hops away. An important consequence related to Byzantine tolerance is that the containment radius of protocols solving those specifications is at least $r$. For some problems, such as the spanning tree construction we consider in this paper, $r$ can not be bounded to a constant. We can show that there exists no $(o(n),1)$-strictly stabilizing protocol for the spanning tree construction.

To circumvent the impossibility result, we define a weaker notion than the strict stabilization. Here, the requirement to the containment radius is relaxed, \emph{i.e.} there may exist processes outside the containment radius that invalidate the specification predicate, due to Byzantine actions. However, the impact of Byzantine triggered action is limited in times: the set of Byzantine processes may only impact the subsystem consisting of processes outside the containment radius a bounded number of times, even if Byzantine processes execute an infinite number of actions.

From the states of $c$-correct processes, \emph{$c$-legitimate configurations} and \emph{$c$-stable configurations} are defined as follows.

\begin{definition}[$c$-legitimate configuration]
A configuration $\rho$ is $c$-legitimate for \emph{spec} if every $c$-correct process $v$ satisfies $spec(v)$.%is legitimate for \emph{spec}.
\end{definition}

\begin{definition}[$c$-stable configuration]
A configuration $\rho$ is $c$-stable if every $c$-correct process never changes the values of its O-variables as long as Byzantine processes make no action.
\end{definition}

Roughly speaking, the aim of self-stabilization is to guarantee that a distributed system eventually reaches a $c$-legitimate and $c$-stable configuration. However, a self-stabilizing system can be disturbed by Byzantine processes after reaching a $c$-legitimate and $c$-stable configuration. The \emph{$c$-disruption} represents the period where $c$-correct processes are disturbed by Byzantine processes and is defined as follows 

\begin{definition}[$c$-disruption]
A portion of execution $e=\rho_0,\rho_1,\ldots,\rho_t$ ($t>1$) is a $c$-disruption if and only if the following holds:
\begin{enumerate}
\item $e$ is finite,
\item $e$ contains at least one action of a $c$-correct process for changing the value of an O-variable,
\item $\rho_0$ is $c$-legitimate for \emph{spec} and $c$-stable, and
\item $\rho_t$ is the first configuration after $\rho_0$ such that $\rho_t$ is $c$-legitimate for \emph{spec} and $c$-stable.
\end{enumerate}
\end{definition}

Now we can define a self-stabilizing protocol such that Byzantine processes may only impact the subsystem consisting of processes outside the containment radius a bounded number of times, even if Byzantine processes execute an infinite number of actions.

\begin{definition}[$(t,k,c,f)$-time contained configuration]
A configuration $\rho_0$ is $(t,k,c,f)$-time contained for \emph{spec} if given at most $f$ Byzantine processes, the following properties are satisfied:
\begin{enumerate}
\item $\rho_0$ is $c$-legitimate for \emph{spec} and $c$-stable,
\item every execution starting from $\rho_0$ contains a $c$-legitimate configuration for \emph{spec} after which the values of all the O-variables of $c$-correct processes remain unchanged (even when Byzantine processes make actions repeatedly and forever), 
\item every execution starting from $\rho_0$ contains at most $t$ $c$-disruptions, and 
\item every execution starting from $\rho_0$ contains at most $k$ actions of changing the values of O-variables for each $c$-correct process.
\end{enumerate}
\end{definition}

\begin{definition}[$(t,c,f)$-strongly stabilizing protocol]
A protocol $A$ is $(t,c,f)$-strongly stabilizing if and only if starting from any arbitrary configuration, every execution involving at most $f$ Byzantine processes contains a $(t,k,c,f)$-time contained configuration that is reached after at most $l$ rounds. Parameters $l$ and $k$ are respectively the $(t,c,f)$-stabilization time and the $(t,c,f)$-process-disruption time of $A$.
\end{definition}

Note that a $(t,k,c,f)$-time contained configuration is a $(c,f)$-contained configuration when $t=k=0$, and thus, $(t,k,c,f)$-time contained configuration is a generalization (relaxation) of a $(c,f)$-contained configuration. Thus, a strongly stabilizing protocol is weaker than a strictly stabilizing one (as processes outside the containment radius may take incorrect actions due to Byzantine influence). However, a strongly stabilizing protocol is stronger than a classical self-stabilizing one (that may never meet their specification in the presence of Byzantine processes).

The parameters $t$, $k$ and $c$ are introduced to quantify the strength of fault containment, we do not require each process to know the values of the parameters. Actually, the protocols proposed in this paper assume no knowledge on the parameters.

There exists some relationship between these parameters as the following proposition states:

\begin{proposition}
If a configuration is $(t,k,c,f)$-time contained for \emph{spec}, then $t\leq nk$.
\end{proposition}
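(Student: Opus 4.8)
The plan is to prove the bound by a counting argument that charges each $c$-disruption to a distinct O-variable change performed by a $c$-correct process. The two facts I would combine are: first, each $c$-disruption forces at least one such change (item 2 of the definition of $c$-disruption); and second, each $c$-correct process performs at most $k$ such changes in any execution (item 4 of the definition of $(t,k,c,f)$-time contained configuration), while there are at most $n$ $c$-correct processes in the system.

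First I would fix an arbitrary execution $e=\rho_0,\rho_1,\ldots$ starting from the $(t,k,c,f)$-time contained configuration $\rho_0$, and isolate the $c$-disruptions occurring along $e$. The key structural observation is that these $c$-disruptions are pairwise disjoint as portions of $e$: by definition a $c$-disruption ends at the \emph{first} configuration after its start that is again $c$-legitimate and $c$-stable, so any subsequent $c$-disruption can only begin at or after that configuration. Consequently the transitions in which O-variables are changed lie in disjoint index ranges across distinct disruptions.

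Next, using item 2 of the $c$-disruption definition, each disruption contains at least one transition $\rho_m \mapsto \rho_{m+1}$ in which some $c$-correct process changes an O-variable. Since distinct disruptions occupy disjoint ranges of transitions, I can injectively assign to every disruption one such distinct change event, so the number of $c$-disruptions in $e$ is at most the total number of O-variable changes made by $c$-correct processes along $e$. Finally I would bound that total: by item 4, each of the at most $n$ $c$-correct processes changes its O-variables at most $k$ times in $e$, giving a total of at most $nk$; combined with the previous step, the number of $c$-disruptions in $e$ is at most $nk$, and taking $e$ to be an execution realizing the $t$ disruptions yields $t\leq nk$.

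The only point requiring genuine care is the disjointness of the disruptions, which must be argued precisely from the ``first configuration after $\rho_0$'' clause so that the charging map is truly injective and no O-variable change is counted for two different disruptions; once that is in place the arithmetic is immediate. I would also note that the statement implicitly treats $t$ as the exact (tight) disruption count of $\rho_0$, since as a mere upper bound $t$ could be taken arbitrarily large; the argument above bounds the actual number of disruptions in every execution by $nk$, which is what the inequality asserts.
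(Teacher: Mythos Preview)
Your proof is correct and follows essentially the same counting argument as the paper: each $c$-disruption contributes at least one O-variable change by a $c$-correct process, and there are at most $nk$ such changes overall, hence at most $nk$ disruptions. You are in fact more careful than the paper in two respects---you explicitly argue the disjointness of disruptions (which the paper uses tacitly when it turns ``$nk+1$ disruptions'' into ``$nk+1$ distinct change actions''), and you correctly flag that the inequality only makes sense if $t$ is taken as the tight disruption count rather than an arbitrary upper bound.
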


\begin{proof}
Let $\rho_0$ be a $(t,k,c,f)$-time contained configuration for \emph{spec}. Assume that $t>nk$.

If there exists no execution $e=\rho_0,\rho_1,\ldots$ such that $e$ contains at least $nk+1$ $c$-disruptions, then $\rho_0$ is in fact a $(nk,k,c,f)$-time contained configuration for \emph{spec} (and hence, we have $t\leq nk$). This is contradictory. So, there exists an execution $e=\rho_0,\rho_1,\ldots$ such that $e$ contains at least $nk+1$ $c$-disruptions.

As any $c$-disruption contains at least one action of a $c$-correct process for changing the value of an O-variable by definition, we obtain that $e$ contains at least $nk+1$ actions of $c$-correct processes for changing the values of O-variables. There is at most $n$ $c$-correct processes. So, there exists at least one $c$-correct process which takes at least $k+1$ actions for changing the value of O-variables in $e$. This is contradictory with the fact that $\rho_0$ is a $(t,k,c,f)$-time contained configuration for \emph{spec}.
\end{proof}

\subsection{Discussion}

There exists an analogy between the respective powers of $(c,f)$-strict stabilization and $(t,c,f)$-strong stabilization for the one hand, and self-stabilization and pseudo-stabilization for the other hand.

A \emph{pseudo-stabilizing} protocol (defined in~\cite{BGM93j}) guarantees that every execution has a suffix that matches the specification, but it could never reach a legitimate configuration from which any possible execution matches the specification. In other words, a pseudo-stabilizing protocol can continue to behave satisfying the specification, but with having possibility of invalidating the specification in future. A particular schedule can prevent a pseudo-stabilizing protocol from reaching a legitimate configuration for arbitrarily long time, but cannot prevent it from executing its desired behavior (that is, a behavior satisfying the specification) for arbitrarily long time. Thus, a pseudo-stabilizing protocol is useful since desired behavior is eventually reached.

Similarly, every execution of a $(t,c,f)$-strongly stabilizing protocol has a suffix such that every $c$-correct process executes its desired behavior. But, for a $(t,c,f)$-strongly stabilizing protocol, there may exist executions such that the system never reach a configuration after which Byzantine processes never have the ability to disturb the $c$-correct processes: all the $c$-correct processes can continue to execute their desired behavior, but with having possibility that the system (resp. each process) could be disturbed at most $t$ (resp. $k$) times by Byzantine processes in future. A notable but subtle difference is that the invalidation of the specification is caused only by the effect of Byzantine processes in a $(t,c,f)$-strongly stabilizing protocol, while the invalidation can be caused by a scheduler in a pseudo-stabilizing protocol.

\section{Strongly-Stabilizing Spanning Tree Construction}

\subsection{Problem Definition}

In this section, we consider only distributed systems in which a given process $r$ is distinguished as the root of the tree.

For \emph{spanning tree construction}, each process $v$ has an O-variable $prnt_v$ to designate a neighbor as its parent. Since processes have no identifiers, $prnt_v$ actually stores $k~(\in \{1,~2,\ldots,~\Delta_v\})$ to designate its $k$-th neighbor as its parent. No neighbor is designated as the parent of $v$ when $prnt_v = 0$ holds. For simplicity, we use $prnt_v=k\ (\in \{1,~2,\ldots,~\Delta_v\}$) and $prnt_v=u$ (where $u$ is the $k$-th neighbor of $v\in N_v(k)$) interchangeably, and $prnt_v=0$ and $prnt_v=\bot$ interchangeably.

The goal of spanning tree construction is to set $prnt_v$ of every process $v$ to form a rooted spanning tree, where $prnt_r = 0$ should hold for the root process $r$.

We consider Byzantine processes that can behave arbitrarily. The faulty processes can behave as if they were any internal processes of the spanning tree, or even as if they were the root processes. The first restriction we make on Byzantine processes is that we assume the root process $r$ can start from an arbitrary state, but behaves correctly according to a protocol. Another restriction on Byzantine processes is that we assume that all the correct processes form a connected subsystem; Byzantine processes never partition the system.

It is impossible, for example, to distinguish the (real) root $r$ from the faulty processes behaving as the root, we have to allow that a spanning forest (consisting of multiple trees) is constructed, where each tree is rooted with a root, correct or faulty one.

We define the specification predicate $spec(v)$ of the tree construction as follows.
\[spec(v) : \begin{cases}
 (prnt_v = 0) \wedge (level_v = 0) \text{ if } v \text{ is the root } r \\
 (prnt_v \in \{1,\ldots,~\Delta_v\}) \wedge ((level_v = level_{prnt_v}+1)\vee(prnt_v \text{ is Byzantine})) \text{ otherwise}
\end{cases}\]

Notice that $spec(v)$ requires that a spanning tree is constructed at any $0$-legitimate configuration, when no Byzantine process exists.

Figure~\ref{fig:ST} shows an example of $0$-legitimate configuration with Byzantine processes. The arrow attached to each process points the neighbor designated as its parent.

\begin{figure}[t]
\noindent \begin{centering} \ifx\JPicScale\undefined\def\JPicScale{0.5}\fi
\unitlength \JPicScale mm
\begin{picture}(120,80)(0,0)
\linethickness{0.3mm}
\put(5,55){\circle{10}}

\linethickness{0.3mm}
\put(35,65){\circle{10}}

\linethickness{0.3mm}
\put(75,75){\circle{10}}

\linethickness{0.3mm}
\put(75,35){\circle{10}}

\linethickness{0.3mm}
\put(25,35){\circle{10}}

\linethickness{0.3mm}
\put(115,65){\circle{10}}

\linethickness{0.3mm}
\put(105,15){\circle{10}}

\linethickness{0.3mm}
\put(50,5){\circle{10}}

\linethickness{0.3mm}
\put(5,5){\circle{10}}

\linethickness{0.3mm}
\multiput(75,40)(0.17,0.12){208}{\line(1,0){0.17}}
\put(110,65){\vector(4,3){0.12}}
\linethickness{0.3mm}
\multiput(105,20)(0.12,0.48){83}{\line(0,1){0.48}}
\put(115,60){\vector(1,4){0.12}}
\linethickness{0.3mm}
\multiput(10,55)(0.24,0.12){83}{\line(1,0){0.24}}
\put(10,55){\vector(-2,-1){0.12}}
\linethickness{0.3mm}
\multiput(5,50)(0.12,-0.12){125}{\line(1,0){0.12}}
\put(5,50){\vector(-1,1){0.12}}
\linethickness{0.3mm}
\put(5,10){\line(0,1){40}}
\linethickness{0.3mm}
\put(75,40){\line(0,1){30}}
\put(75,40){\vector(0,-1){0.12}}
\put(5,65){\makebox(0,0)[cc]{$r$}}

\put(120,75){\makebox(0,0)[cc]{$b$}}

\put(75,35){\makebox(0,0)[cc]{1}}

\put(105,15){\makebox(0,0)[cc]{1}}

\linethickness{0.3mm}
\multiput(40,65)(0.36,0.12){83}{\line(1,0){0.36}}
\linethickness{0.3mm}
\multiput(80,75)(0.36,-0.12){83}{\line(1,0){0.36}}
\linethickness{0.3mm}
\multiput(50,10)(0.15,0.12){167}{\line(1,0){0.15}}
\linethickness{0.3mm}
\multiput(75,30)(0.2,-0.12){125}{\line(1,0){0.2}}
\linethickness{0.3mm}
\multiput(55,5)(0.54,0.12){83}{\line(1,0){0.54}}
\put(100,15){\vector(4,1){0.12}}
\linethickness{0.3mm}
\put(30,35){\line(1,0){40}}
\linethickness{0.3mm}
\multiput(25,40)(0.12,0.24){83}{\line(0,1){0.24}}
\put(75,75){\makebox(0,0)[cc]{2}}

\put(35,65){\makebox(0,0)[cc]{1}}

\put(5,55){\makebox(0,0)[cc]{0}}

\put(25,35){\makebox(0,0)[cc]{1}}

\put(50,5){\makebox(0,0)[cc]{2}}

\put(115,65){\makebox(0,0)[cc]{4}}

\put(5,5){\makebox(0,0)[cc]{3}}

\linethickness{0.3mm}
\put(10,5){\line(1,0){35}}
\put(45,5){\vector(1,0){0.12}}
\linethickness{0.3mm}
\multiput(25,30)(0.15,-0.12){167}{\line(1,0){0.15}}
\end{picture}
  \par\end{centering}
 \caption{A legitimate configuration for spanning tree construction (numbers denote the level of processes). $r$ is the (real) root and $b$ is a Byzantine process which acts as a (fake) root.}
 \label{fig:ST}
\end{figure}

\subsection{Protocol $ss$-$ST$}

In many self-stabilizing tree construction protocols (see the survey of \cite{G03r}), each process checks locally the consistence of its $level$ variable with respect to the one of its neighbors. When it detects an inconsistency, it changes its $prnt$ variable in order to choose a ``better'' neighbor. The notion of ``better'' neighbor is based on the global desired property on the tree (\emph{e.g.} shortest path tree, minimun spanning tree...).

When the system may contain Byzantine processes, they may disturb their neighbors by providing alternatively ``better'' and ``worse'' states. The key idea of protocol $ss$-$ST$ to circumvent this kind of perturbation is the following: when a correct process detects a local inconsistency, it does not choose a ``better'' neighbor but it chooses another neighbor according to a round robin order (along the set of its neighbor).

Figure \ref{fig:ssST} presents our strongly-stabilizing spanning tree construction protocol $ss$-$ST$ that can tolerate any number of Byzantine processes other than the root process (providing that the subset of correct processes remains connected). These assumptions are necessary since a Byzantine root or a set of Byzantine processes that disconnects the set of correct processes may disturb all the tree infinitely often. Then, it is impossible to provide a $(t,k,f)$-strongly stabilizing protocol for any finite integer $t$.

The protocol is composed of three rules. Only the root can execute the first one ({\tt GA0}). This rule sets the root in a legitimate state if it is not the case. Non-root processes may execute the two other rules ({\tt GA1} and {\tt GA2}). The rule {\tt GA1} is executed when the state of a process is not legitimate. Its execution leads the process to choose a new parent and to compute its local state in function of this new parent. The last rule ({\tt GA2}) is enabled when a process is in a legitimate state but there exists an inconsistence between its variables and its shared registers. The execution of this rule leads the process to compute the consistent values for all its shared registers.
  
\begin{figure}%[t]
\begin{center}
\begin{tabbing}
xxx \= xxx \= xxx \= xxx \= xxx \= xxx \= xxx \= xxx \= xx \=\kill
{\tt constants of process $v$} \\
\> $\Delta_v =$ the degree of $v$; \\
\> $N_v = $ the set of neighbors of $v$;  \\
%\\
{\tt variables of process $v$} \\
\> $prnt_v\in \{0,1,2,\ldots,\Delta_v\}$: integer; // $prnt_v=0$ if $v$ has no parent,\\ 
\> \> \> \> \> \> \> \> \> // $prnt_v=k\in \{1,2,\ldots,\Delta_v\}$ if $N_v[k]$ is the parent of $v$.    \\
\> $level_v$: integer; // distance from the root.\\
%\\
{\tt variables in shared register $r_{v,u}$} \\
\> $r$-$prnt_{v,u}$: boolean; // $r$-$prnt_{v,u}=$\textit{true} iff $u$ is a parent of $v$.    \\
\> $r$-$level_{v,u}$: integer; // the value of $level_v$ \\
%\\
{\tt predicates} \\
\> $pred_0: prnt_v \neq 0$ {\tt or} $level_v \neq 0$ {\tt or} $\exists w\in N_v,[(r$-$prnt_{v,w}, r$-$level_{v,w})\neq($\textit{false}$, 0)]$\\
\> $pred_1: prnt_v \notin \{1,~2,\ldots,~\Delta_v\}$ {\tt or} $level_v \ne r$-$level_{prnt_v,v}+1$ \\
\> $pred_2: (r$-$prnt_{v,prnt_v}, r$-$level_{v,prnt_v})\neq($\textit{true}$, level_v)$\\
\> \> \> {\tt or} $\exists w\in N_v-\{prnt_v\},[(r$-$prnt_{v,w}, r$-$level_{v,w})\neq($\textit{false}$, level_v)]$ \\
%\\
{\tt atomic action of the root $v=r$} // represented in form of guarded action \\
\> {\tt GA0:}$pred_0$ $\longrightarrow$ \\
\> \> \> $prnt_v :=0 ;$\\
\> \> \> $level_v := 0;$\\
\> \> \> {\tt for each} $w \in N_v$ {\tt do} $(r$-$prnt_{v,w}, r$-$level_{v,w}):=($\textit{false}$, 0)$;\\
%\\
{\tt atomic actions of $v\neq r$} // represented in form of guarded actions \\
\> {\tt GA1:}$pred_1 \longrightarrow$ \\
\> \> \> $prnt_v := next_v(prnt_v)$ {\tt where} $next_v(k)=(k$ {\tt mod} $\Delta_v)+1$;\\
\> \> \> $level_v := r$-$level_{prnt_v,v}+1;$\\
\> \> \> $(r$-$prnt_{v,prnt_v}, r$-$level_{v,prnt_v}):=($\textit{true}$, level_v)$;\\
\> \> \> {\tt for each} $w \in N_v-\{prnt_v\}$ {\tt do} $(r$-$prnt_{v,w}, r$-$level_{v,w}):=($\textit{false}$, level_v)$;\\
\> {\tt GA2:}$\neg pred_1$ {\tt and} $pred_2 \longrightarrow$ \\
\> \> \> $(r$-$prnt_{v,prnt_v}, r$-$level_{v,prnt_v}):=($\textit{true}$, level_v)$;\\
\> \> \> {\tt for each} $w \in N_v-\{prnt_v\}$ {\tt do} $(r$-$prnt_{v,w}, r$-$level_{v,w}):=($\textit{false}$, level_v)$;\\
\end{tabbing}
\end{center}
\caption{Protocol $ss$-$ST$ (actions of process $v$)}
\label{fig:ssST}
\end{figure}

\subsection{Proof of Strong Stabilization of $ss$-$ST$}

We cannot make any assumption on the initial values of register variables. But, we can observe that if an output register of a correct process has inconsistent values with the process variables then this process is enabled by a rule of $ss$-$ST$. By fairness assumption, any such process takes a step in a finite time.

Once a correct process $v$ executes one of its action, variables of its output registers have values consistent with the process variables: $r$-$prnt_{v,prnt_v}=true$, $r$-$prnt_{v,w}=false\ (w \in N_v-\{prnt_v\})$, and $r$-$level_{v,w}=level_v\ (w \in N_v)$ hold.

Consequently, we can assume in the following that all the variables of output registers of every correct process have consistent values with the process variables.

We denote by $\cal{LC}$ the following set of configurations:
\[\begin{array}{rcl}
\cal{LC}&=&\Big\{\rho\in C \Big| (prnt_r=0) \wedge (level_r=0) \wedge\\
&& ~~~~~~~~~~\big(\forall v\in V-(B\cup\{r\}),(prnt_v \in \{1,\ldots,~\Delta_v\}) \wedge (level_v = level_{prnt_v}+1)\big)\Big\}
\end{array}\]

We interest now on properties of configurations of $\cal{LC}$.

\begin{lemma}\label{lem:closure}
Any configuration of $\cal{LC}$ is $0$-legitimate and $0$-stable.
\end{lemma}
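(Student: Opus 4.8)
The plan is to prove the two assertions of Lemma~\ref{lem:closure} separately, namely that any $\rho\in\mathcal{LC}$ is $0$-legitimate for $spec$ and that it is $0$-stable. Recall that with containment radius $c=0$, a $0$-correct process is simply any correct (non-Byzantine) process, so I must verify that every correct process $v$ satisfies $spec(v)$, and that no correct process changes its O-variables ($prnt_v$ and $level_v$) so long as Byzantine processes take no action.

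First I would establish $0$-legitimacy. The claim is essentially a matter of unwinding definitions: I must show that the defining conjunction of $\mathcal{LC}$ implies $spec(v)$ for every correct $v$. For the root $r$, the definition of $\mathcal{LC}$ directly gives $prnt_r=0$ and $level_r=0$, which is exactly the first branch of $spec(r)$. For a correct non-root process $v\in V-(B\cup\{r\})$, the definition of $\mathcal{LC}$ gives $prnt_v\in\{1,\ldots,\Delta_v\}$ and $level_v=level_{prnt_v}+1$; this matches the second branch of $spec(v)$ via its first disjunct $(level_v=level_{prnt_v}+1)$, so $spec(v)$ holds regardless of whether $prnt_v$ is Byzantine. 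Hence every correct process satisfies $spec(v)$ and $\rho$ is $0$-legitimate.

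Next I would prove $0$-stability, i.e.\ that no correct process changes its O-variables while Byzantine processes are idle. The natural approach is to show that in a configuration of $\mathcal{LC}$, no rule of $ss$-$ST$ that modifies $prnt_v$ or $level_v$ is enabled at any correct process. For the root, rule {\tt GA0} is guarded by $pred_0$, and the first two conjuncts of $\mathcal{LC}$ (together with the standing assumption that output registers of correct processes are consistent with their variables, so $r$-$prnt_{r,w}=$\textit{false} and $r$-$level_{r,w}=0$) falsify $pred_0$; thus {\tt GA0} is disabled. For a correct non-root $v$, the only rule changing O-variables is {\tt GA1}, guarded by $pred_1$. Here I must argue that $pred_1$ is false, which requires $prnt_v\in\{1,\ldots,\Delta_v\}$ (given by $\mathcal{LC}$) and $level_v=r\text{-}level_{prnt_v,v}+1$. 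The subtle point is that $\mathcal{LC}$ is stated in terms of $level_{prnt_v}$ (the variable of the parent process), whereas $pred_1$ reads the register value $r$-$level_{prnt_v,v}$; I close this gap using the consistency assumption, which guarantees $r$-$level_{prnt_v,v}=level_{prnt_v}$ — but this is exactly where care is needed when $prnt_v$ happens to be a Byzantine neighbor, since the consistency assumption was justified only for correct processes.

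This register-versus-variable discrepancy for Byzantine parents is the step I expect to be the main obstacle. If $prnt_v$ is Byzantine, $r$-$level_{prnt_v,v}$ need not equal any well-defined $level_{prnt_v}$, so $pred_1$ could in principle be true and {\tt GA1} enabled, seemingly threatening stability. The resolution I anticipate is that $0$-stability only requires the absence of O-variable changes \emph{as long as Byzantine processes make no action}: if $prnt_v$ is Byzantine and takes no step, its output register $r$-$level_{prnt_v,v}$ is frozen, and the value stored in $\mathcal{LC}$ must already be consistent with $level_v=r\text{-}level_{prnt_v,v}+1$, because the configuration was reached by a legitimate computation — or, more carefully, I would argue that the definition of $\mathcal{LC}$ is to be read with $level_{prnt_v}$ interpreted as the frozen register contents for Byzantine parents, so that $pred_1$ is falsified by construction. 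I would make this reading explicit and then conclude that neither {\tt GA1} (for non-roots) nor {\tt GA0} (for the root) can fire, so O-variables are immutable under an idle adversary; rule {\tt GA2} touches only registers, not O-variables, and is therefore irrelevant to stability. Combining both parts yields the lemma.
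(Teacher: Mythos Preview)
Your approach is the same as the paper's: verify $spec(v)$ from the definition of $\mathcal{LC}$, then argue that no guarded action of $ss$-$ST$ is enabled at any correct process. The paper's own proof is three sentences; it simply asserts that ``no correct process is enabled by $ss$-$ST$ in $\rho$'' and does not examine the Byzantine-parent case you isolate.

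You are right that the register-versus-variable discrepancy when $prnt_v$ is Byzantine is the only nontrivial point, and it is a genuine looseness in the paper's formalization rather than something you introduced. However, your first proposed resolution (``the configuration was reached by a legitimate computation'') does not work: $\mathcal{LC}$ is defined by a predicate on states, not as a set of reachable configurations, so nothing in the definition forbids a Byzantine $b$ with $level_b \neq r$-$level_{b,v}$ in some $\rho \in \mathcal{LC}$, and under a literal reading $pred_1$ would then be true at $v$. Your second resolution---reading $level_{prnt_v}$ in the definition of $\mathcal{LC}$ as the register value $r$-$level_{prnt_v,v}$ visible to $v$---is the only way to make the lemma hold as stated and is almost certainly the paper's intent (the standing consistency assumption immediately preceding $\mathcal{LC}$ already identifies the two for correct parents, and for a Byzantine parent only the register is ever observable). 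Present it explicitly as a clarification of the definition rather than as a proof step; with that reading $pred_1$ is false by construction and your argument, like the paper's, goes through immediately. Your observation that {\tt GA2} touches only registers is fine, though under the consistency assumption {\tt GA2} is in any case disabled, which is the stronger claim the paper makes.
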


\begin{proof}
Let $\rho$ be a configuration of $\cal{LC}$. By definition of $spec$, it is obvious that $\rho$ is $0$-legitimate.

Note that no correct process is enabled by $ss$-$ST$ in $\rho$. Consequently, no actions of $ss$-$ST$ can be executed and we can deduce that $\rho$ is $0$-stable.
\end{proof}

We can observe that there exists some $0$-legitimate configurations which not belong to $\cal{LC}$ (for example the one of Figure \ref{fig:ssST}).

\begin{lemma}\label{lem:convergence}
Given at most $n-1$ Byzantine processes, for any initial configuration $\rho_0$ and any execution $e=\rho_0,\rho_1,\ldots$ starting from $\rho_0$, there exists a configuration $\rho_i$ such that $\rho_i\in\cal{LC}$.
\end{lemma}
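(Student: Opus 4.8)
The plan is to show that starting from any configuration, a weakly fair execution must eventually reach a configuration in $\cal{LC}$, i.e.\ one in which the root is in its legitimate state ($prnt_r=0$, $level_r=0$) and every correct non-root process $v$ has a correct parent pointer with $level_v=level_{prnt_v}+1$. I would argue by a convergence argument over rounds, reasoning outward from the root according to distance in the subgraph induced by the correct processes (which is connected by assumption). The first step is to dispose of the root: by rule {\tt GA0}, whenever the root is not in its legitimate state, predicate $pred_0$ holds and the root is enabled; by weak fairness it executes within a finite number of rounds, after which {\tt GA0} sets $prnt_r=0$, $level_r=0$ and makes all its output registers consistent. After this point the root is never again enabled (its guard $pred_0$ stays false), so its O-variables are frozen forever.

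The main work is a layered/inductive argument on the distance $\ell$ from the root within the correct subsystem. I would define, for each $\ell\ge 0$, the claim that after finitely many rounds every correct process at distance exactly $\ell$ from the root has stabilized to a legitimate state relative to the root (i.e.\ satisfies $prnt_v\in\{1,\dots,\Delta_v\}$ and $level_v=level_{prnt_v}+1$ with its chosen parent being a correct process whose own level is already frozen), and never changes its O-variables thereafter. The base case $\ell=0$ is the root, handled above. For the inductive step, suppose all correct processes at distance $<\ell$ have frozen correct values. A correct process $v$ at distance $\ell$ has at least one neighbor $u$ at distance $\ell-1$ that is correct and frozen. The crucial structural observation is that because {\tt GA1} advances $prnt_v$ through its neighbors in round-robin order via $next_v$, the pointer $prnt_v$ cannot remain stuck pointing at a Byzantine neighbor forever: each time $v$ is enabled by $pred_1$ it moves to the next neighbor, so within at most $\Delta_v$ enablements it must point at the frozen correct neighbor $u$, at which moment it sets $level_v=level_u+1=\ell$ consistently, after which $pred_1$ becomes and stays false for $v$ (since $u$'s register no longer changes). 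I would make this precise by showing that once $v$ points at such a $u$, no rule re-enables $pred_1$ at $v$, so $v$ stabilizes and its O-variables freeze; then by weak fairness this happens within finitely many rounds.

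The step I expect to be the main obstacle is rigorously controlling the round-robin argument in the presence of Byzantine neighbors that keep changing their registers: I must argue that $v$ is enabled only finitely often before it locks onto a good parent, rather than being knocked out of a legitimate state again and again. The key is to observe that a legitimate $v$ (with $\neg pred_1$) is only ever re-disturbed through {\tt GA1} if its current parent's advertised level changes, and once $v$'s parent is a frozen correct process this cannot happen; a Byzantine neighbor that is \emph{not} currently selected as parent can only trigger {\tt GA2}, which touches shared registers but never the O-variables $prnt_v,level_v$, hence does not take $v$ out of $\cal{LC}$'s condition. Combining this with the round-robin progress (bounding the number of bad parents $v$ can cycle through before hitting the frozen correct neighbor) yields that each distance layer stabilizes in finite time. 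Since the correct subsystem is connected and finite, there is a maximum distance $d$, and after all $d+1$ layers have stabilized the resulting configuration satisfies exactly the defining predicate of $\cal{LC}$, which completes the proof.
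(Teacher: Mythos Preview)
Your layered induction contains the right key observation (the round-robin pointer eventually visits a permanently stable correct neighbor), but the inductive hypothesis you state is too strong and is false in general. You claim that every correct $v$ at distance $\ell$ eventually freezes ``with its chosen parent being a correct process whose own level is already frozen.'' Consider $v$ at distance $1$ with a Byzantine neighbor $b$: if initially $prnt_v=b$ with $level_v=r\text{-}level_{b,v}+1$ and $b$ never touches that register again, then $pred_1$ is false at $v$ forever; $v$ never executes {\tt GA1}, never advances its pointer, and freezes with a Byzantine parent, not with $r$. More subtly, $v$ at distance $\ell$ can sit disabled while pointing at a correct neighbor $w$ at distance $\ge\ell$; when (and whether) $w$ later moves is not controlled by your hypothesis on layers $<\ell$, so the induction does not close as written. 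Your ``main obstacle'' paragraph is also looking in the wrong direction: the difficulty is not that Byzantine neighbors make $v$ enabled too often, but that $v$ may be \emph{disabled} on a bad parent and therefore never cycle round to the good one. (The side remark that $level_v=level_u+1=\ell$ is also unwarranted: nothing forces levels to equal distances, and $\mathcal{LC}$ does not require it.)

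The repair is simply to weaken the hypothesis to ``every correct process at distance $\le\ell$ executes {\tt GA1} only finitely often,'' dropping any claim about where the pointer lands. Then for $v$ at distance $\ell{+}1$: if $v$ executed {\tt GA1} infinitely often it would point at its correct distance-$\ell$ neighbor $u$ infinitely often; after $u$'s last {\tt GA1}, the next such visit leaves $pred_1$ permanently false at $v$, a contradiction. This is the paper's argument recast as an induction. The paper itself is still more direct: it observes that a configuration in which every correct process is disabled already lies in $\mathcal{LC}$, and then argues by contradiction that if some correct process were enabled infinitely often, connectedness of the correct subsystem would yield adjacent correct $u,v$ with $u$ eventually disabled forever and $v$ enabled infinitely often, whence weak fairness plus round-robin force $v$ to set $prnt_v:=u$ and stay disabled---no layer structure is needed.
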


\begin{proof}
First, note that if all the correct processes are disabled in a configuration $\rho$, then $\rho$ belongs to $\cal{LC}$. Thus, it is sufficient to show that $ss$-$ST$ eventually reaches a configuration $\rho_i$ in any execution (starting from any configuration) such that all the correct processes are disabled in $\rho_i$.

By contradiction, assume that there exists a correct process that is enabled infinitely often. Notice that once the root process $r$ is activated, $r$ becomes and remains disabled forever. From the assumption that all the correct processes form a connected subsystem, there exists two neighboring correct processes $u$ and $v$ such that $u$ becomes and remains disabled and $v$ is enabled infinitely often. Consider execution after $u$ becomes and remains disabled. Since the daemon is weakly fair, $v$ executes its action infinitely often. Then, eventually $v$ designates $u$ as its parent. It follows that $v$ never becomes enabled again unless $u$ changes $level_u$. Since $u$ never becomes enabled, this leads to the contradiction.
\end{proof}

\begin{lemma}\label{lem:tcf}
Any configuration in $\cal{LC}$ is a $(f\Delta^d,\Delta^d,0,f)$-time contained configuration of the spanning tree construction, where $f$ is the number of Byzantine processes and $d$ is the diameter of the subsystem consisting of all the correct processes.
\end{lemma}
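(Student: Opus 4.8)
The plan is to verify the four requirements of a $(t,k,c,f)$-time contained configuration with $c=0$, $t=f\Delta^d$, $k=\Delta^d$. Requirement~1 ($\rho_0$ is $0$-legitimate and $0$-stable) is immediate from Lemma~\ref{lem:closure}. The remaining three requirements --- eventual permanent stabilization of the O-variables of correct processes (requirement~2), at most $\Delta^d$ O-variable changes per correct process (requirement~4), and at most $f\Delta^d$ $0$-disruptions (requirement~3) --- I would all derive from a single structural analysis organized as an induction on the distance $\delta(v)$ of a correct process $v$ to the root $r$ in the subsystem induced by the correct processes (note $\delta(v)\le d$). The engine of the whole argument is the round-robin rule \texttt{GA1}: a correct non-root process reads only the level advertised by its \emph{current} parent, and whenever that value ceases to match $level_v$ it moves $prnt_v$ one notch forward rather than toward a ``better'' neighbor.

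For requirements~2 and~4 I would argue as follows. The root is frozen from $\rho_0$ on: \texttt{GA0} inspects only $r$'s own state and output registers, none of which a Byzantine process can alter, so once disabled (as it is in any configuration of $\mathcal{LC}$) it never re-enables. Assume inductively that after some finite time every correct process within distance $i$ of $r$ has fixed, unchanging O-variables. A correct process $v$ at distance $i+1$ has a correct neighbor $u$ with $\delta(u)=i$ whose advertised level is henceforth constant. Each execution of \texttt{GA1} by $v$ is triggered by a change of the level advertised by its current parent and advances $prnt_v$ by one step; hence a Byzantine neighbor can provoke at most one step of $v$ before $v$ leaves it, and the round-robin order forces $v$ to visit $u$ within one full cycle, after which it is permanently disabled (reading a constant value from a correct, non-Byzantine register). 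This shows $v$ stabilizes in finite time, giving requirement~2, and bounding the number of full cycles of $v$ by the number of level changes of $u$ yields a recurrence of the shape $m_v\le \Delta_v\,(m_u+1)$ on the number $m_v$ of changes of $v$, which unfolds against $m_r=0$ to a bound of order $\Delta^{\delta(v)}\le\Delta^d$ as in requirement~4.

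For requirement~3 the key observation is that a $0$-disruption can only \emph{start} at a correct process that points to a Byzantine one. Indeed, from a $0$-stable, $0$-legitimate configuration every correct process is disabled; a correct process $w$ can re-enable only through $pred_1$ (its own output registers, which are the ones tested by $pred_2$, cannot be touched by a Byzantine process), and $pred_1$ reads exclusively the register of $prnt_w$, which can change while all correct registers are frozen only if $prnt_w$ is Byzantine. The first \texttt{GA1} of each disruption therefore carries a correct neighbor $w$ of some Byzantine $b$ past $b$ during an episode in which $w$ points to $b$. Since $w$ points to any given neighbor at most once per round-robin cycle, and the number of its cycles is $O(m_w/\Delta_w)\le O(\Delta^{\delta(w)-1})$ by requirement~4, summing these pointing episodes over the correct neighbors of each Byzantine process and over the $f$ Byzantine processes gives the announced bound $f\Delta^d$ on the number of disruptions.

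The main obstacle I anticipate is the exact book-keeping behind the two stated constants. The heart of it is the claim that, once a fixed correct neighbor $u$ is available, round-robin lets a Byzantine neighbor disturb $v$ at most once per traversal --- this is precisely what distinguishes \texttt{GA1} from a ``choose the better neighbor'' rule and what converts unboundedly many Byzantine actions into boundedly many O-variable changes; making the per-cycle charge tight enough that the incomplete final cycle does not inflate the recurrence beyond exactly $\Delta^d$ (respectively $f\Delta^d$) is the delicate part. A secondary point to nail down is that the stabilized configuration is genuinely $0$-legitimate in the borderline case where $v$ comes to rest on a Byzantine neighbor that stops acting: there the disjunct ``$prnt_v$ is Byzantine'' of $spec(v)$ supplies legitimacy, and permanence of the O-variables follows because that neighbor is, by assumption, no longer changing its register.
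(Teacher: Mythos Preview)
Your plan is essentially the paper's own proof: the same induction on the distance $\delta$ from $r$ in the subsystem of correct processes, driven by the round-robin rule of \texttt{GA1}, to bound the number of O-variable changes of each correct process by (roughly) $\Delta^{\delta}$; and the same reduction of the number of $0$-disruptions to the number of times a correct neighbour of a Byzantine process has that Byzantine process as its current parent and is ``deceived'' by it. Your justification of why a disruption must begin at such a boundary process (only $pred_1$ can flip, and only through the register of a Byzantine parent) is in fact more explicit than the paper's, and the finiteness of each disruption, which the paper derives separately from Lemma~\ref{lem:convergence}, is subsumed in your argument for requirement~2. The bookkeeping worry you flag is genuine: the recurrence $m_v\le\Delta_v(m_u+1)$ unfolds to $\sum_{j\le\delta}\Delta^j$ rather than exactly $\Delta^{\delta}$, and the paper's own induction step in fact writes $\Delta+\Delta\cdot\Delta^{\delta-1}=\Delta^{\delta}$, so the stated constants are tight only up to lower-order terms in both presentations.
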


\begin{proof}
Let $\rho_0$ be a configuration of $\cal{LC}$ and $e=\rho_0,\rho_1,\ldots$ be an execution starting from $\rho_0$. First, we show that any $0$-correct process takes at most $\Delta^d$ actions in $e$, where $d$ is the diameter of the subsystem consisting of all the correct processes.

Let $F$ be the set of Byzantine processes in $e$. Consider a subsystem $S'$ consisting of all the correct processes: $S'=(P-F, L')$ where $L'=\{l \in L~|~l \in (P-F) \times (P-F)\}$. We prove by induction on the distance $\delta$ from the root in $S'$ that a correct process $v$ $\delta$ hops away from $r$ in $S'$ executes its action at most $\Delta^\delta$ times in $e$.

\begin{itemize}
\item Induction basis ($\delta=1$):\\
Let $v$ be any correct process neighboring to the root $r$. Since $\rho_0$ is a legitimate configuration, $prnt_r=0$ and $level_r=0$ hold at $\rho_0$ and remain unchanged in $e$. Thus, if $prnt_v=r$ and $level_v=1$ hold in a configuration $\sigma$, then $v$ never changes $prnt_v$ or $level_v$ in any execution starting from $\sigma$. Since $prnt_v=r$ and $level_v=1$ hold within the first $\Delta_v -1\leq \Delta$ actions of $v$, $v$ can execute its action at most $\Delta$ times.

\item Induction step (with induction assumption):\\
Let $v$ be any correct process $\delta$ hops away from the root $r$ in $S'$, and $u$ be a correct neighbor of $v$ that is $\delta-1$ hops away from $r$ in $S'$ (this process exists by the assumption that the subgraph of correct processes of $S$ is connected). From the induction assumption, $u$ can execute its action at most $\Delta^{\delta-1}$ times.

Assume that $prnt_v=u$ and $level_v=level_u+1$ hold in a given configuration $\sigma$. We can observ that $v$ is not enabled until $u$ does not modify its state. Then, the round-robin order used for pointers modification allows us to deduce that $v$ executes at most $\Delta_v\leq \Delta$ actions between two actions of $u$ (or before the first action of $u$). By the induction assumption, $u$ executes its action at most $\Delta^{\delta-1}$ times. Thus, $v$ can execute its action at most $\Delta + \Delta \times (\Delta^{\delta-1}) = \Delta^\delta$ times.
\end{itemize}

Consequently, any $0$-correct process takes at most $\Delta^d$ actions in $e$.

We say that a Byzantine process $b$ deceive a correct neighbor $v$ in the step $\rho\mapsto\rho'$ if the state of $b$ makes the guard of an action of $v$ true in $\rho$ and if $v$ executes this action in this step.

As a $0$-disruption can be caused only by an action of a Byzantine process from a legitimate configuration, we can bound the number of $0$-disruptions by counting the total number of times that correct processes are deceived of neighboring Byzantine processes.

If a $0$-correct $v$ is deceived by a Byzantine neighbor $b$, it takes necessarily $\Delta_v$ actions before being deceiving again by $b$ (recall that we use a round-robin policy for $prnt_v$). As any $0$-correct process $v$ takes at most $\Delta^d$ actions in $e$, $v$ can be deceived by a given Byzantine neighbor at most $\Delta^{d-1}$ times. A Byzantine process can have at most $\Delta$ neighboring correct processes and thus can deceive correct processes at most $ \Delta \times \Delta^{d-1} = \Delta^d$ times.  We have at most $f$ Byzantine processes, so the total number of times that correct processes are deceived by neighboring Byzantine processes is $f\Delta^d$.

Hence, the number of $0$-disruption in $e$ is bounded by $f\Delta^D$. It remains to show that any $0$-disruption have a finite length to prove the result.

By contradiction, assume that there exists an infinite $0$-disruption $d=\rho_i,\ldots$ in $e$. This implies that for all $j\geq i$, $\rho_j$ is not in $\cal{LC}$, which contradicts Lemma \ref{lem:convergence}. Then, the result is proved.
\end{proof}

\begin{theorem}[Strong-stabilization]\label{theorem:ss}
Protocol $ss$-$ST$ is a $(f\Delta^d, 0, f)$-strong stabilizing protocol for the spanning tree construction, where $f$ is the number of Byzantine processes and $d$ is the diameter of the subsystem consisting of all the correct processes.
\end{theorem}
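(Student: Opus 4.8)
The plan is to obtain the theorem as an essentially immediate consequence of the two preceding lemmas, namely the convergence result (Lemma~\ref{lem:convergence}) and the time-containment result (Lemma~\ref{lem:tcf}), after unwinding the definition of $(t,c,f)$-strong stabilization. The definition requires that, from any arbitrary configuration, every execution with at most $f$ Byzantine processes reach, within a finite number $l$ of rounds, a $(t,k,c,f)$-time contained configuration; here $t$, $c$ and $f$ are fixed by the statement while $l$ and $k$ are merely named (the stabilization time and the process-disruption time), so it suffices to exhibit them as finite quantities.

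First I would fix an arbitrary initial configuration $\rho_0$ and an arbitrary execution $e=\rho_0,\rho_1,\ldots$ involving at most $f$ Byzantine processes. Since the root is assumed correct and the correct processes stay connected, we have $f\le n-1$, so Lemma~\ref{lem:convergence} applies and yields an index $i$ with $\rho_i\in\cal{LC}$. By Lemma~\ref{lem:tcf}, this configuration $\rho_i$ is a $(f\Delta^d,\Delta^d,0,f)$-time contained configuration for the spanning tree construction. Hence every execution, from any configuration, reaches a $(f\Delta^d,\Delta^d,0,f)$-time contained configuration, which is exactly the containment part of the definition with disruption time $t=f\Delta^d$, containment radius $c=0$, and process-disruption time $k=\Delta^d$.

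Second, to match the remaining clause of the definition I would supply the round bound $l$. Lemma~\ref{lem:convergence} only asserts that $\cal{LC}$ is \emph{eventually} reached (it argues by contradiction on a process enabled infinitely often) and does not by itself count rounds, so I would strengthen it to a finite number of rounds. The cleanest route is to reuse the distance-based induction already present in the proof of Lemma~\ref{lem:tcf}: a correct process at distance $\delta$ from $r$ in the subsystem $S'$ of correct processes fixes its $prnt$ and $level$ once all its closer-to-root correct neighbours have done so, and under weak fairness this propagation from the root outward over the $d$ levels of $S'$ completes within finitely many rounds. Since the theorem constrains only $t$, $c$ and $f$, establishing that such an $l$ exists (and is finite) is all that is needed.

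The genuine content of the theorem lies entirely in Lemmas~\ref{lem:convergence} and~\ref{lem:tcf}; the theorem is their packaging, and identifying $c=0$, $t=f\Delta^d$, $k=\Delta^d$ together with checking the four clauses of time-containment is bookkeeping delivered by the cited lemmas. Consequently I expect the only real care-point to be the round bound $l$, because Lemma~\ref{lem:convergence} is phrased as an eventual-convergence statement rather than a round-complexity one; I would therefore either invoke the standard fact that a weakly-fair protocol proven to converge does so within finitely many rounds, or, to be fully explicit, extract the quantitative bound from the action-counting induction of Lemma~\ref{lem:tcf}.
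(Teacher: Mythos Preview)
Your proposal is correct and follows essentially the same route as the paper: the paper's proof is a three-line appeal to Lemmas~\ref{lem:closure}, \ref{lem:convergence} and~\ref{lem:tcf}, noting that it suffices to reach a configuration in $\cal{LC}$ and that Lemma~\ref{lem:convergence} guarantees this. Your write-up is in fact more careful than the paper's on the round bound $l$; the paper defers that quantitative point to a separate proposition on stabilization time and does not address it inside the theorem's proof at all.
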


\begin{proof}
From Lemmas \ref{lem:closure} and \ref{lem:tcf}, it is sufficient to show that $ss$-$ST$ eventually reaches a configuration in $\cal{LC}$. Lemma \ref{lem:convergence} allows us to conclude.
\end{proof}

\subsection{Time Complexities}

\begin{proposition}
The $(f\Delta^d,0,f)$-process-disruption time of $ss$-$ST$ is $\Delta^d$ where $d$ is the diameter of the subsystem consisting of all the correct processes.
\end{proposition}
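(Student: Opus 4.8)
The plan is to obtain the value of the process-disruption time directly from the time-contained configuration that the protocol is shown to reach. Recall that, by the definition of a $(t,c,f)$-strongly stabilizing protocol, the process-disruption time is precisely the parameter $k$ of the $(t,k,c,f)$-time contained configuration guaranteed by the protocol. For $ss$-$ST$, Lemma \ref{lem:convergence} (and Theorem \ref{theorem:ss}) establishes that every execution eventually reaches a configuration in $\cal{LC}$, and Lemma \ref{lem:tcf} shows that every configuration of $\cal{LC}$ is $(f\Delta^d,\Delta^d,0,f)$-time contained. Reading off the second parameter gives $k=\Delta^d$, which is exactly the claimed process-disruption time. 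So the first and main step is simply to invoke these two results.

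The substance of the bound therefore lies in the induction already carried out inside the proof of Lemma \ref{lem:tcf}, which I would recall for completeness: by induction on the distance $\delta$ of a correct process from the root $r$ in the subsystem $S'$ of correct processes, a process $\delta$ hops away executes at most $\Delta^\delta$ actions, the factor $\Delta$ at each level coming from the round-robin update of $prnt_v$ that forces at most $\Delta_v\le\Delta$ changes between two consecutive modifications of the parent's state. Since the diameter of $S'$ is $d$, every $0$-correct process takes at most $\Delta^d$ actions in total, and in particular changes its O-variables ($prnt_v$ and $level_v$) at most $\Delta^d$ times, giving the upper bound $k\le\Delta^d$.

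If the statement is to be read as an exact value rather than an upper bound, the remaining and hardest step is a matching lower bound: exhibiting a system whose correct subsystem has diameter $d$ and maximum degree $\Delta$, a starting configuration in $\cal{LC}$, and a schedule together with Byzantine behavior under which some $0$-correct process genuinely changes its O-variables $\Delta^d$ times. The natural candidate is a layered graph in which a Byzantine process repeatedly deceives its correct neighbors so that the induced changes cascade outward, and in which each process at distance $\delta$ is forced, by the round-robin discipline, to cycle through all $\Delta$ of its pointers once per modification of its distance-$(\delta-1)$ parent. The difficulty is orchestrating the daemon so that these cascades compose multiplicatively across all $d$ levels without any process stabilizing prematurely; this is the step I expect to require the most care, and it may well be that the intended reading of the proposition is only the upper bound, in which case the first two paragraphs suffice.
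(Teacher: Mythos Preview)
Your proposal is correct and takes essentially the same approach as the paper: the paper's entire proof is the single sentence ``This result directly follows from Theorem~\ref{theorem:ss} and Lemma~\ref{lem:tcf},'' which is exactly your first paragraph. Your second paragraph (recalling the induction) is redundant but harmless, and your third paragraph is unnecessary: the paper treats the process-disruption time simply as the parameter $k$ appearing in the $(t,k,c,f)$-time contained configuration guaranteed by the protocol, not as a tight quantity requiring a matching lower-bound construction, so reading off $k=\Delta^d$ from Lemma~\ref{lem:tcf} is all that is intended.
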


\begin{proof}
This result directly follows from Theorem \ref{theorem:ss} and Lemma \ref{lem:tcf}.
\end{proof}

\begin{proposition}
The $(f\Delta^d,0,f)$-stabilization time of $ss$-$ST$ is $O((n-f)\Delta^d)$ rounds where $f$ is the number of Byzantine processes and $d$ is the diameter of the subsystem consisting of all the correct processes.
\end{proposition}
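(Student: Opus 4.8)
The plan is to identify the $(f\Delta^d,0,f)$-stabilization time with the worst-case number of rounds needed to first reach a configuration of $\mathcal{LC}$. Indeed, by Theorem~\ref{theorem:ss} together with Lemma~\ref{lem:tcf}, the time-contained configuration whose existence is asserted is exactly a configuration of $\mathcal{LC}$, and by the observation used in the proof of Lemma~\ref{lem:convergence} a configuration lies in $\mathcal{LC}$ as soon as every correct process is (permanently) disabled. So it suffices to bound the number of rounds until all correct processes become disabled, and I would obtain the bound by combining a global budget on the number of correct actions with a ``one action per round'' progress guarantee.

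For the budget, I would first observe that the root stabilizes within the first round: rule {\tt GA0} can fire at most once because $pred_0$ involves only the root's own variables and output registers, so once it is falsified it remains false and $prnt_r=0$, $level_r=0$ hold forever afterwards. With the root thus permanently disabled, I would re-run the induction from the proof of Lemma~\ref{lem:tcf} verbatim, now anchored at the (arbitrary) initial configuration rather than at a configuration of $\mathcal{LC}$: its only ingredients are the round-robin update of $prnt_v$ and the fact that a process stays disabled as long as its chosen stable neighbor does not move, neither of which uses the initial configuration. This yields that a correct process at distance $\delta$ from $r$ in the correct subsystem $S'$ performs at most $\Delta^\delta$ actions. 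Summing over the at most $n-f$ correct processes bounds the total number of correct actions before $\mathcal{LC}$ is reached by $O((n-f)\Delta^d)$.

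Next I would show that every round starting outside $\mathcal{LC}$ contains at least one correct action. A configuration outside $\mathcal{LC}$ has an enabled correct process; by weak fairness that process is scheduled during the round, and either it is still enabled when scheduled and therefore acts, or it was disabled earlier in the round, in which case some neighbor updated one of its input registers. Ruling out the degenerate case in which the disabling neighbor is always Byzantine is handled exactly by the frontier argument of Lemma~\ref{lem:convergence}: a Byzantine process that keeps overwriting registers to disable its correct neighbors either drives the system into $\mathcal{LC}$ or forces a genuine correct {\tt GA1} action to eventually occur. Dividing the global budget of $O((n-f)\Delta^d)$ correct actions by this guaranteed per-round progress then gives a round count of $O((n-f)\Delta^d)$, with one extra round to certify that $\mathcal{LC}$ has been reached.

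The hard part will be making this last step quantitative rather than merely a finiteness statement: Lemma~\ref{lem:convergence} only asserts that $\mathcal{LC}$ is eventually reached, whereas here I need that each individual round before reaching $\mathcal{LC}$ makes measurable progress, despite Byzantine neighbors being able to disable an enabled correct process (by rewriting their registers) just before it is scheduled. The key observation to close this gap is that such a Byzantine ``re-disabling'' either leaves the system in $\mathcal{LC}$ or leaves some correct process strictly advanced in its round-robin search toward its stable ancestor, so the adversary cannot waste unboundedly many rounds without consuming the finite action budget established above.
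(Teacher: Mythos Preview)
Your approach is essentially the same as the paper's: bound the total number of actions of correct processes by $O((n-f)\Delta^d)$ via an induction on the distance from $r$ in the correct subsystem, then convert this step budget into a round bound using ``at least one correct action per round.'' The paper does exactly this, with two small differences worth noting.

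First, the paper does not reuse Lemma~\ref{lem:tcf}'s induction verbatim but redoes it with a slightly looser constant: from an arbitrary initial configuration it proves that a correct process at distance $i$ from $r$ takes at most $2\sum_{j=1}^{i}\Delta^{j}$ steps (the extra factor accounts for actions taken \emph{before} the parent's first action as well as between consecutive parent actions). Your claim that Lemma~\ref{lem:tcf}'s argument goes through ``verbatim'' with anchor $\Delta^\delta$ is a little too optimistic---the base case there uses that $r$ is already stable at $\rho_0$---but the fix is exactly the constant-factor adjustment the paper makes, and the asymptotic bound is unchanged.

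Second, the paper's conversion from steps to rounds is a single sentence: ``As a round counts at least one step of a correct process, we obtain the result.'' It does not engage at all with the Byzantine--disabling scenario you flag. So the ``hard part'' you identify is not something the paper resolves more carefully than you do; it simply asserts the claim. Your extra care here is reasonable, but note that once you accept (as the paper does) the standing assumption that registers of correct processes are consistent, a correct process $v$ outside $\mathcal{LC}$ with $prnt_v$ correct is enabled and can only be disabled by an action of a correct neighbor, which already charges against the budget; the Byzantine case you worry about concerns only processes currently pointing to a Byzantine parent, and there the round-robin ensures at most $\Delta$ such ``wasted'' local activations before the pointer moves to a correct neighbor---again absorbed in the $O(\Delta^d)$ bound.
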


\begin{proof}
By the construction of the algorithm, any correct process $v$ which has a correct neighbor $u$ takes at most $\Delta$ steps between two actions of $u$.

Given two processes $u$ and $v$, we denote by $d'(u,v)$ the distance between $u$ and $v$ in the subgraph of correct processes of $S$. We are going to prove the following property by induction on $i>0$:

$(P_i)$: any correct process $v$ such that $d'(v,r)=i$ takes at most $2\cdot\overset{i}{\underset{j=1}{\sum}}\Delta^j$ steps in any execution starting from any configuration.

\begin{itemize}
\item Induction basis ($i=1$):\\
Let $v$ be a correct neighbor of the root $r$. By the algorithm, we know that the root $r$ takes at most one step (because $r$ is correct). By the previous remark, we know that $v$ takes at most $\Delta$ steps before and after the action of $r$. Consequently, $v$ takes at most $2\Delta$ steps  in any execution starting from any configuration.
\item Induction step ($i>1$ with induction assumption):\\
Let $v$ be a correct process such that $d'(v,r)=i$. Denote by $u$ one neighbor of $v$ such that $d'(u,r)=i-1$ (this process exists by the assumption that the subgraph of correct processes of $S$ is connected).

By the previous remark, we know that $v$ takes at most $\Delta$ steps before the first action of $u$, between two actions of $u$ and after the last action of $u$. By induction assumption, we know that $u$ takes at most $2\cdot\overset{i-1}{\underset{j=1}{\sum}}\Delta^j$ steps. Consequently, $v$ takes at most $A$ actions where:
\[A=\Delta+\left(2\cdot\overset{i-1}{\underset{j=1}{\sum}}\Delta^j\right)\cdot\Delta+\Delta=2\cdot\overset{i}{\underset{j=1}{\sum}}\Delta^j\]
\end{itemize}
Since there is $(n-f)$ correct processes and any correct process satisfies $d'(v,r)<d$, we can deduce that the system reach a legitimate configuration in at most $O((n-f)\Delta^d)$ steps of correct processes.

As a round counts at least one step of a correct process, we obtain the result.
\end{proof}

\section{Strongly-Stabilizing Tree Orientation}

\subsection{Problem Definition}

In this section, we consider only \emph{tree systems}, \emph{i.e.} distributed systems containing no cycles. We assume that all processes in a tree system are identical and thus no process is distinguished as a root.

Informally, \emph{tree orientation} consists in transforming a tree system (with no root) into a rooted tree system. Each process $v$ has an O-variable $prnt_v$ to designate a neighbor as its parent. Since processes have no identifiers, $prnt_v$ actually stores $k~(\in \{1,~2,\ldots,~\Delta_v\})$ to designate its $k$-th neighbor as its parent. But for simplicity, we use $prnt_v=k$ and $prnt_v=u$ (where $u$ is the $k$-th neighbor of $v$) interchangeably.

The goal of tree orientation is to set $prnt_v$ of every process $v$ to form a rooted tree. However, it is impossible to choose a single process as the root because of impossibility of symmetry breaking. Thus, instead of a single root process, a single \emph{root link} is determined as the root: link $(u, v)$ is the root link when processes $u$ and $v$ designate each other as their parents (Fig.~\ref{fig:TO}(a)). From any process $w$, the root link can be reached by following the neighbors designated by the variables $prnt$.

When a tree system $S$ has a Byzantine process (say $w$), $w$ can prevent communication between subtrees of $S-\{w\}$\footnote{For a process subset $P'~(\subseteq P)$, $S-P'$ denotes a distributed system obtained by removing processes in $P'$ and their incident links.}. Thus, we have to allow each of the subtrees to form a rooted tree independently. We define the specification predicate $spec(v)$ of the tree orientation as follows.

\begin{center}
$spec(v):\forall u~(\in N_v) [(prnt_v=u)\vee (prnt_u=v)\vee (u$ is Byzantine faulty)].
\end{center}

Note that the tree topology, the specification and the uniquiness of $prnt_v$ (for any process $v$) imply that, for any $0$-legitimate configuration, there is at most one root link in any connected component of correct processes. Hence, in a fault-free system, there exists exactly one root link in any $0$-legitimate configuration.
  
Figure~\ref{fig:TO} shows examples of $0$-legitimate configurations (a) with no Byzantine process and (b) with a single Byzantine process $w$. The arrow attached to each process points the neighbor designated as its parent. Notice that, from Fig.~\ref{fig:TO}(b), subtrees consisting of correct processes are classified into two categories: one is the case of forming a rooted tree with a root link in the subtree ($T_1$ in Fig.~\ref{fig:TO}(b)), and the other is the case of forming a rooted tree with a root process, where the root process is a neighbor of a Byzantine process and designates the Byzantine process as its parent ($T_2$ in Fig.~\ref{fig:TO}(b)).

\begin{figure}[t]
 \begin{center}
   \includegraphics{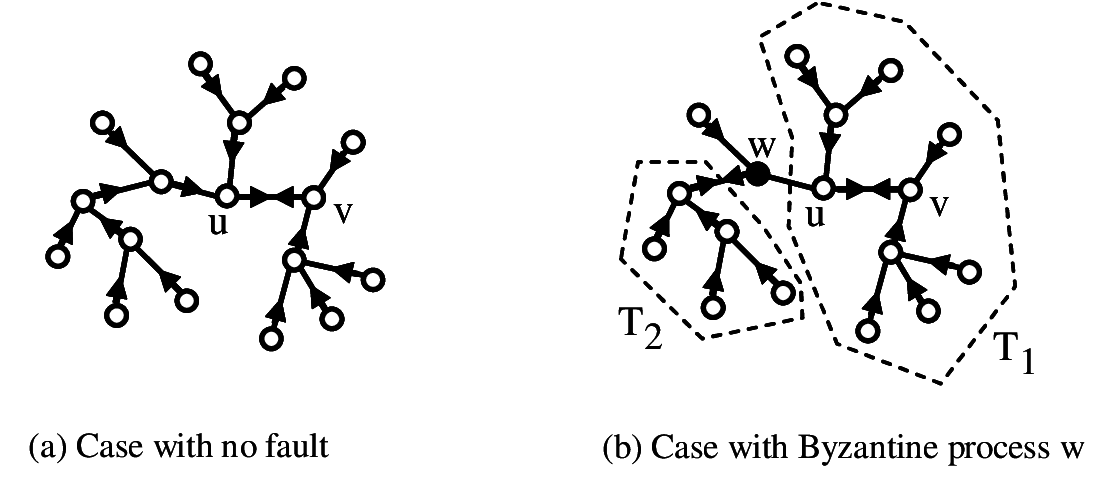}
 \end{center}
 \caption{Tree orientation}
 \label{fig:TO}
\end{figure}

\subsection{Impossibility for Two Byzantine Processes}

Tree orientation seems to be a very simple task. Actually, for tree orientation in fault-free systems, we can design a self-stabilizing protocol that chooses a link incident to a center process\footnote{A process $v$ is a center when $v$ has the minimum eccentricity where eccentricity is the largest distance to a leaf. It is known that a tree has a single center or two neighboring centers.} as the root link: in case that the system has a single center, the center can choose a link incident to it, and in case that the system has two neighboring centers, the link between the centers become the root link. However, tree orientation becomes impossible if we have Byzantine processes. By the impossibility results of \cite{NA02c}, we can show that tree orientation has no $(o(n), 1)$-strictly stabilizing protocol; \emph{i.e.} the Byzantine influence cannot be contained in the sense of ``strict stabilization'', even if only a single Byzantine process is allowed.

An interesting question is whether the Byzantine influence can be contained in a weaker sense of ``strong stabilization''. The following theorem gives a negative answer to the question: if we have two Byzantine processes, bounding the number of disruptions is impossible. We prove the impossibility for more restricted schedules, called the \emph{central daemon}, which disallows two or more processes to make actions at the same time. Notice that impossibility results under the central daemon are stronger than those under the distributed daemon in the sense that impossibility results under the central daemon also hold for the distributed daemon.

\begin{theorem}\label{th:imposs}
Even under the central daemon, there exists no deterministic $(t, o(n), 2)$-strongly stabilizing protocol for tree orientation where $t$ is any (finite) integer and $n$ is the number of processes.
\end{theorem}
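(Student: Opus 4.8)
The plan is to construct an adversarial strategy, involving two Byzantine processes, that forces infinitely many disruptions regardless of the protocol. The natural setting is a symmetric tree: take a path (or a tree with a line of symmetry) and place one Byzantine process at each end, so that the two Byzantine endpoints $b_1$ and $b_2$ are symmetric with respect to the midpoint of the correct chain between them. The root link in tree orientation must ultimately point ``toward'' some location; the key intuition is that the two Byzantine processes, by behaving symmetrically, can make each half of the tree believe that the root link lies on the opposite side, and then simultaneously flip their presented states to reverse this belief. Each such reversal forces the correct processes in the interior (which are at distance $\omega(1)$, hence beyond any containment radius $c = o(n)$) to change their $prnt$ variables, producing a disruption.

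First I would fix an arbitrary candidate protocol $A$ and assume, for contradiction, that it is $(t, c, 2)$-strongly stabilizing with $c = o(n)$ and $t$ finite. I would then set up the symmetric topology and exploit a symmetry/indistinguishability argument: because the system is anonymous and the two ends are symmetric, a correct process in the interior cannot locally determine which Byzantine endpoint is the ``true'' direction of the root link. The Byzantine processes mimic legitimate root-neighbor behavior. The central daemon is used to schedule actions one at a time so that the symmetric image of any reachable legitimate configuration is itself reachable: starting from a $c$-legitimate, $c$-stable configuration $\rho$ in which the orientation points toward $b_1$, I would have the daemon drive the system (via Byzantine action at the endpoints) toward the mirror-image configuration $\rho'$ in which orientation points toward $b_2$. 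Since $\rho$ and $\rho'$ differ in the $prnt$ values of interior $c$-correct processes, the transition from $\rho$ to $\rho'$ must contain at least one $c$-disruption.

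The crucial step is to argue this reversal can be repeated \emph{forever}, yielding more than $t$ disruptions and contradicting strong stabilization. Here I would invoke Lemma-style reasoning analogous to convergence: from any configuration the protocol must (by self-stabilization, the $t=k=0$-free part of the definition) converge to a $c$-legitimate $c$-stable configuration orienting toward one endpoint, say $b_1$; the Byzantine pair then switches its presented behavior so that the \emph{only} $c$-legitimate $c$-stable configurations compatible with the new Byzantine states orient toward $b_2$, again forcing convergence and hence a disruption; then switch back. Because each switch forces a fresh disruption and the switches are unbounded in number, the execution contains unboundedly many $c$-disruptions.

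The main obstacle I anticipate is making the indistinguishability rigorous while $c$ grows with $n$: I must ensure the interior correct processes genuinely cannot break the symmetry locally, which requires choosing $n$ large enough relative to $c$ (this is exactly where the hypothesis $c = o(n)$ enters) so that the two ``halves'' of the tree are locally identical out to distance $c$ and the $c$-correct region sits strictly inside the zone that both Byzantine processes can repeatedly toggle. A secondary technical point is guaranteeing that each reversal actually forces a $c$-correct process to \emph{change an O-variable} (condition 2 of the $c$-disruption definition) rather than merely passing through intermediate states; this should follow because the two target legitimate orientations disagree on $prnt_v$ for interior processes $v$, so any execution connecting them must flip some such $prnt_v$.
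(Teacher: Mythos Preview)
Your overall setup (a chain with Byzantine endpoints, an indistinguishability/symmetry idea, and a contradiction via infinitely many disruptions) matches the paper, but the mechanism you propose for forcing each individual disruption does not work. You write that the Byzantine pair ``switches its presented behavior so that the \emph{only} $c$-legitimate $c$-stable configurations compatible with the new Byzantine states orient toward $b_2$.'' This is false: the predicate $spec(v)$ contains the disjunct ``$u$ is Byzantine,'' so whatever the Byzantine endpoints write in their output registers has no effect on which configurations are $c$-legitimate. Similarly, $c$-stability is defined with respect to executions in which Byzantine processes take no further action, so the Byzantine processes cannot invalidate it merely by altering their own registers. In short, after the system reaches $\rho$, the Byzantine endpoints can change state arbitrarily and $\rho$ remains $c$-legitimate and $c$-stable for the correct processes; nothing in your argument forces any $c$-correct process to take a step. (Also, a $c$-legitimate configuration on the chain need not ``orient toward one endpoint'' at all---the root link can sit anywhere among the correct processes---so the dichotomy you set up is not exhaustive.)

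The paper supplies the missing idea. From the reached configuration $\rho_1$ on the chain $S$, it builds a \emph{fault-free} chain $S'$ of length $3n$ by concatenating three copies of $S$ in state $\rho_1$, with the middle copy reversed left-to-right. Because a strongly stabilizing protocol is in particular self-stabilizing on fault-free systems, $S'$ must converge from this initial configuration to one with a \emph{single} root link; hence the center process of at least one of the three copies changes its $prnt$ during that convergence. The two Byzantine endpoints of $S$ then simply replay, step by step, the behavior of the two boundary processes of that particular copy inside $S'$. The correct processes of $S$ cannot distinguish this from the genuine fault-free execution of the corresponding segment of $S'$, so the center of $S$ is forced to change $prnt$, yielding an $o(n)$-disruption. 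Iterating the construction from the new legitimate configuration $\rho_2$ gives unboundedly many disruptions. This gluing-plus-fault-free-convergence step is what actually produces a concrete execution in which a $c$-correct process moves, and it is absent from your plan; your symmetry intuition points in the right direction but does not by itself yield such an execution.
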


\begin{proof}
Let $S=(P,L)$ be a chain (which is a special case of a tree system) of $n$ processes: $P=\{v_1,~v_2,\ldots,v_n\}$ and $L=\{(v_i,~v_{i+1})\ |\ 1 \le i \le n-1\}$.

For purpose of contradiction, assume that there exists a $(t, o(n), 2)$-strongly  stabilizing protocol $A$ for some integer $t$. In the following, we show, for $S$ with Byzantine processes $v_1$ and $v_n$, that $A$ has an execution $e$ containing an infinite number of $o(n)$-disruptions. This contradicts the assumption that $A$ is a $(t, o(n), 2)$-strongly stabilizing protocol. 

In $S$ with Byzantine processes $v_1$ and $v_n$, $A$ eventually reaches a configuration $\rho_1$ that is $o(n)$-legitimate for \emph{spec} and $o(n)$-stable by definition of a $(t, o(n), 2)$-strongly stabilizing protocol. This execution to $\rho_1$ constitutes the prefix of $e$.

To construct $e$ after $\rho_1$, consider another chain $S'=(P',L')$ of $3n$ processes and an execution of $A$ on $S'$, where let $P'=\{u_1,~u_2,\ldots,u_{3n}\}$ and $L'=\{(u_i,~u_{i+1})\ |\ 1 \le i \le 3n-1\}$. We consider the initial configuration $\rho'_1$ of $S'$ that is obtained by concatenating three copies (say $S'_1, S'_2$ and $S'_3$) of $S$ in $\rho_1$ where only the central copy $S'_2$ is reversed right-and-left (Fig.~\ref{fig:imp_proof}). More formally, the state of $w_i$ and of $w_{2n+i}$ in $\rho'_1$ is the same as the one of $v_i$ in $\rho_1$ for any $i\in\{1,\ldots,n\}$. Moreover, for any $i\in\{1,\ldots,n\}$, the state of $w_{n+i}$ in $\rho'_1$ is the same as the one of $v_i$ in $\rho_1$ with the following modification: if $prnt_{v_i}=v_{i-1}$ (respectively $prnt_{v_i}=v_{i+1}$) in $\rho_1$, then $prnt_{w_{n+i}}=w_{n+i+1}$ (respectively $prnt_{w_{n+i}}=w_{n+i-1}$) in $\rho'_1$. For example, if $w$ denotes a center process of $S$ (\emph{i.e.} $w=v_{\lceil n/2 \rceil}$), then $w$ is copied to $w'_1=u_{\lceil n/2 \rceil}, w'_2=u_{2n+1-\lceil n/2 \rceil}$ and $w'_3=u_{2n+\lceil n/2 \rceil}$, but only $prnt_{w'_2}$ designates the neighbor in the different direction from $prnt_{w'_1}$ and $prnt_{w'_3}$. From the configuration $\rho'_1$, protocol $A$ eventually reaches a legitimate configuration $\rho''_1$ of $S'$ when $S'$ has no Byzantine process (since a strongly stabilizimg protocol is self-stabilizig in a fault-free system). In the execution from $\rho'_1$ to $\rho''_1$, at least one $prnt$ variable of $w'_1, w'_2$ and $w'_3$ has to change (otherwise, it is impossible to guarantee the uniquiness of the root link in $\rho''_1$). Assume $w'_i$ changes $prnt_{w'_i}$.

Now, we construct the execution $e$ on $S$ after $\rho_1$. The main idea of this proof is to construct an execution on $S$ indistinguishable (for correct processes) from one of $S'$ because Byzantine processes of $S$ behave as correct processes of $S'$. Since $v_1$ and $v_n$ are Byzantine processes in $S$, $v_1$ and $v_n$ can simulate behavior of the end processes of $S'_i$ (\emph{i.e.} $u_{(i-1)n+1}$ and $u_{in}$). Thus, $S$ can behave in the same way as $S'_i$ does from $\rho'_1$ to $\rho''_1$. Recall that process $w'_1$ modifies its pointer in the execution of $S'_i$ does from $\rho'_1$ to $\rho''_1$. Consequently, we can construct the execution that constitutes the second part of $e$, where $prnt_w$ changes at least once. Letting the resulting configuration be $\rho_2$ (that coincides with the configuration $\rho''_i$ of $S'_i$), $\rho_2$ is clearly $o(n)$-legitimate for \emph{spec} and $o(n)$-stable. Thus, the second part of $e$ contains at least one $o(n)$-disruption.

By repeating the argument, we can construct the execution $e$ of $A$ on $S$ that contains an infinite number of $o(n)$-disruptions.
\end{proof}

\begin{figure}[t]
 \begin{center}
   \includegraphics[scale=0.7]{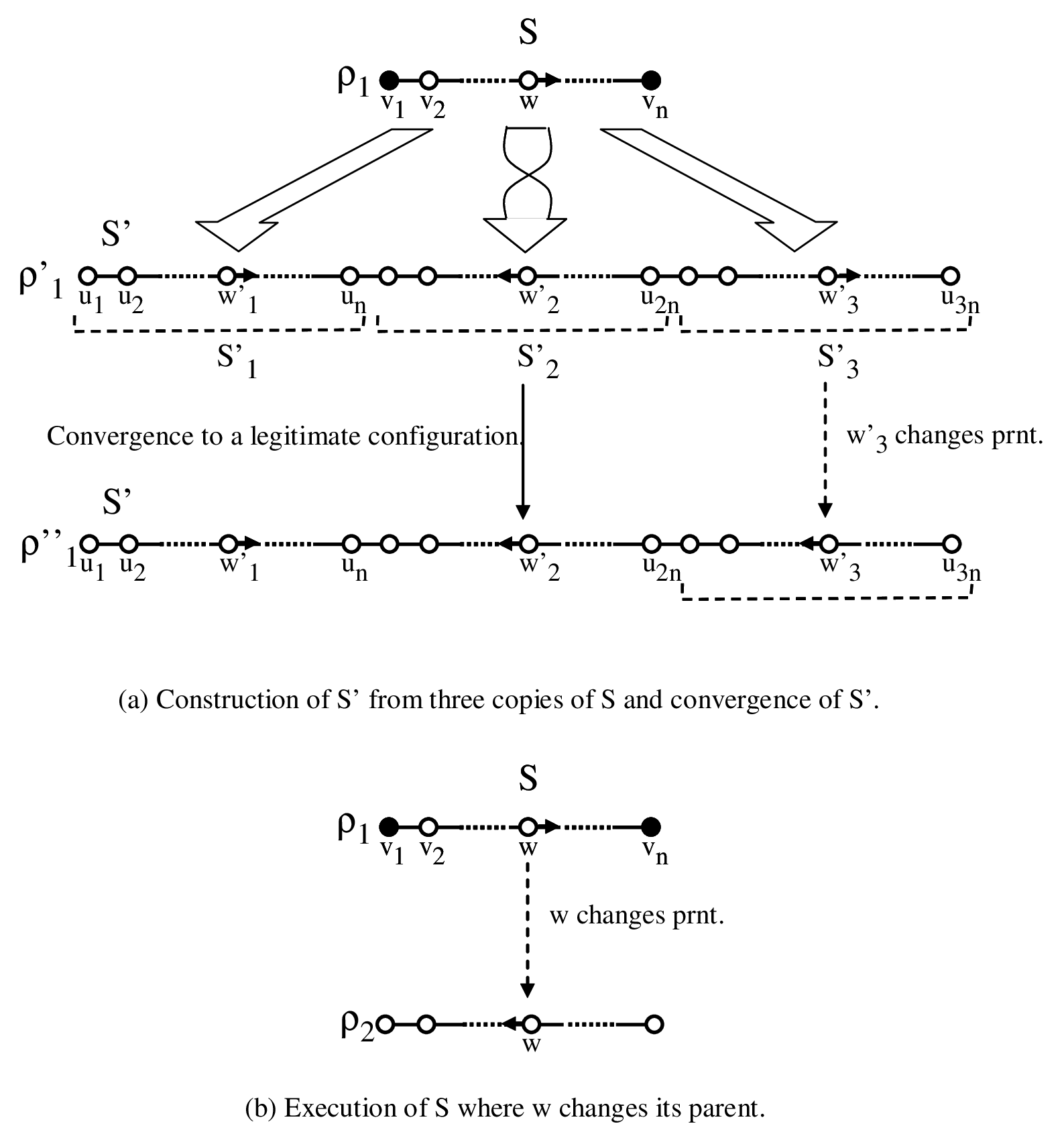}
 \end{center}
 \caption{Construction of execution where $w$ of $S$ changes its parent infinitely often.}
 \label{fig:imp_proof}
\end{figure}

\subsection{A Strongly Stabilizing Protocol for a Single Byzantine Process}

\subsubsection{Protocol $ss$-$TO$}

In the previous subsection, we proved that there is no strongly stabilizing protocol for tree orientation if two Byzantine processes exist. In this subsection, we consider the case with at most a single Byzantine process, and present a $(\Delta,0,1)$-strongly stabilizing tree orientation protocol $ss$-$TO$. Note that we consider the distributed daemon for this possibility result.

In a fault-free tree system, tree orientation can be easily achieved by finding a center process. A simple strategy for finding the center process is that each process $v$ informs each neighbor $u$ of the maximum distance to a leaf from $u$ through $v$. The distances are found and become fixed from smaller ones. When a tree system contains a single Byzantine process, however, this strategy cannot prevent perturbation caused by wrong distances the Byzantine process provides: by reporting longer and shorter distances than the correct one alternatively, the Byzantine process can repeatedly pull the chosen center closer and push it farther.

The key idea of protocol $ss$-$TO$ to circumvent the perturbation is to restrict the Byzantine influence to one-sided effect: the Byzantine process can pull the chosen root link closer but cannot push it farther. This can be achieved using a non-decreasing variable $level_v$ as follows: when a process $v$ finds a neighbor $u$ with a higher level, $u$ chooses $v$ as its parent and copies the level value from $u$. This allows the Byzantine process (say $z$) to make its neighbors choose $z$ as their parents by increasing its own level. However, $z$ can not make neighbor change their parents to other processes by decreasing its own level. Thus, the effect the Byzantine process can make is one-sided.

Protocol $ss$-$TO$ is presented in Fig.~\ref{fig:ssTO}. For simplicity, we regard constant $N_v$ as denoting the neighbors of $v$ and regard variable $prnt_v$ as storing a parent of $v$. Notice that they should be actually implemented using the ordinal numbers of neighbors that $v$ locally assigns.  

The protocol is composed of three rules. The first one ({\tt GA1}) is enabled when a process has a neighbor which provides a strictly greater level. When the rule is executed, the process chooses this neighbor as its parent and computes its new state in function of this neighbor. The rule {\tt GA2} is enabled when a process $v$ has a neighbor $u$ (different from its current parent) with the same level such that $v$ is not the parent of $u$ in the current oriented tree. Then, $v$ chooses $u$ as parent, increments its level by one and refresh its shared registers. The last rule ({\tt GA3}) is enabled for a process when there exists an inconsistence between its variables and its shared registers. The execution of this rule leads the process to compute the consistent values for all its shared registers.

\begin{figure}[h!]
\begin{center}
\begin{tabbing}
xxx \= xxx \= xxx \= xxx \= xxx \= \kill
{\tt constants of process $v$} \\
\> $\Delta_v =$ the degree of $v$; \\
\> $N_v = $ the set of neighbors of $v$;  \\
{\tt variables of process $v$} \\
\> $prnt_v$: a neighbor of $v$; // $prnt_v=u$ if $u$ is a parent of $v$.    \\
\> $level_v$: integer; \\
%\\
{\tt variables in shared register $r_{v,u}$} \\
\> $r$-$prnt_{v,u}$: boolean; 
     // $r$-$prnt_{v,u}=$\textit{true} iff $u$ is a parent of $v$.    \\
\> $r$-$level_{v,u}$: integer; // the value of $level_v$ \\
%\\
{\tt predicates} \\
\> $pred_1: \exists u \in N_v [r$-$level_{u,v} > level_v]$ \\
\> $pred_2: \exists u \in N_v-\{prnt_v\} [(r$-$level_{u,v} = level_v)\wedge  
   (r$-$prnt_{u,v}=$\textit{false}$)]$ \\
\> $pred_3: ((r$-$prnt_{v,prnt_v},r$-$level_{v,prnt_v})\neq ($\textit{true}$,level_v))\vee$\\
\> \> \> $(\exists u\in N_v-\{prnt_v\},(r$-$prnt_{v,u},r$-$level_{v,u})\neq ($\textit{false}$,level_v))$\\
%\\
{\tt atomic actions} // represented in form of guarded actions \\
\> {\tt GA1:}$pred_1\ \longrightarrow$ \\ 
\> \> \> Let $u$ be a neighbor of $v$ s.t. 
         $r$-$level_{u,v}=\max_{w \in N_v}r$-$level_{w,v}$;\\
\> \> \> $prnt_v :=u;\ level_v := r$-$level_{u,v};$\\ 
\> \> \> $(r$-$prnt_{v,u}, r$-$level_{v,u}) := ($\textit{true}$, level_v)$;\\
\> \> \> {\tt for each} $w \in N_v-\{u\}$ {\tt do} 
                 $(r$-$prnt_{v,w}, r$-$level_{v,w}):=($\textit{false}$, level_v)$;\\
%\> \> \> {\tt for each} $r~(\in Out_v-\{r_{v,u}\})$ {\tt do} 
%                 $(r$-$prnt, r$-$level):=(false, level_v)$;\\
\> {\tt GA2:}$\neg pred_1\wedge pred_2\ \longrightarrow$ \\
\> \> \> Let $u$ be a neighbor of $v$ s.t. 
         $(r$-$level_{u,v} = level_v)\wedge (r$-$prnt_{u,v}=$\textit{false}$)$;\\
\> \> \> $prnt_v :=u;\ level_v := level_v+1;$\\
\> \> \> $(r$-$prnt_{v,u}, r$-$level_{v,u}) := ($\textit{true}$, level_v)$;\\
\> \> \> {\tt for each} $w \in N_v-\{u\}$ {\tt do} 
                 $(r$-$prnt_{v,w}, r$-$level_{v,w}):=($\textit{false}$, level_v)$;\\
%\> \> \> {\tt for each} $r~(\in Out_v-\{r_{v,u}\})$ {\tt do} 
%                 $(r$-$prnt, r$-$level):=(false, level_v)$;\\
\> {\tt GA3:}$\neg pred_1\wedge \neg pred_2\wedge pred_3 \longrightarrow$ \\
\> \> \> $(r$-$prnt_{v,prnt_v}, r$-$level_{v,prnt_v}) := ($\textit{true}$, level_v)$;\\
\> \> \> {\tt for each} $w \in N_v-\{prnt_v\}$ {\tt do} 
                 $(r$-$prnt_{v,w}, r$-$level_{v,w}):=($\textit{false}$, level_v)$;
%\> \> \> {\tt for each} $r~(\in Out_v-\{r_{v,prnt_v}\})$ {\tt do} 
%                 $(r$-$prnt, r$-$level):=(false, level_v)$;
\end{tabbing}
\end{center}
\caption{Protocol $ss$-$TO$ (actions of process $v$)}
\label{fig:ssTO}
\end{figure}

\subsubsection{Closure of Legitimate Configurations of $ss$-$TO$}

We refine legitimate configurations of protocol $ss$-$TO$ into several sets of configurations and show their properties. We cannot make any assumption on the initial values of register variables. But once a correct process $v$ executes its action, variables of its output registers have values consistent with the process variables: $r$-$prnt_{v,prnt_v}=true$, $r$-$prnt_{v,w}=false\ (w \in N_v-\{prnt_v\})$, and $r$-$level_{v,w}=level_v\ (w \in N_v)$ hold. In the following, we assume that all the variables of output registers of every correct process have consistent values.

First we consider the fault-free case.

\begin{definition}[$\mathcal{LC}_0$]\label{def:legit0}
In a fault-free tree, we define the set of configurations $\mathcal{LC}_0$ as the set of configurations such that: (a) $spec(v)$ holds for every process $v$ and (b) $level_u=level_v$ holds for any processes $u$ and $v$.
\end{definition}

In any configuration of $\mathcal{LC}_0$, variables $prnt_v$ of all processes form a rooted tree with a root link as Fig.~\ref{fig:TO}(a), and all variables $level_v$ have the same value.

\begin{lemma}\label{lem:legit0}
In a fault-free tree, once protocol $ss$-$TO$ reaches a configuration $\rho$ in $\mathcal{LC}_0$, it remains at $\rho$. 
\end{lemma}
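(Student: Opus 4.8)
The plan is to prove that $\mathcal{LC}_0$ is a closed set under the action of $ss$-$TO$ in a fault-free tree: starting from any $\rho\in\mathcal{LC}_0$, no rule is enabled at any process, and hence the configuration cannot change. Since all three guards {\tt GA1}, {\tt GA2}, {\tt GA3} must be shown to be false at every process, the proof reduces to checking each predicate $pred_1$, $pred_2$, $pred_3$ in turn using the two defining properties of $\mathcal{LC}_0$: that $spec(v)$ holds everywhere and that all $level$ values are equal. Recall from the standing assumption (stated just before Definition~\ref{def:legit0}) that in $\rho$ every correct process has output registers consistent with its own variables, so in particular $r$-$level_{u,v}=level_u$ and $r$-$prnt_{u,v}=true$ iff $prnt_u=v$ for every neighbor $u$ of $v$.

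First I would dispose of $pred_3$. By the consistency assumption, the output registers of $v$ already satisfy $r$-$prnt_{v,prnt_v}=true$, $r$-$level_{v,prnt_v}=level_v$, and for every other neighbor $w$, $r$-$prnt_{v,w}=false$ and $r$-$level_{v,w}=level_v$. This is exactly the negation of $pred_3$, so {\tt GA3} is disabled at every process. Next I would handle $pred_1$: since all $level$ values are equal in $\rho$, we have $r$-$level_{u,v}=level_u=level_v$ for every neighbor $u$ of $v$, so there is no neighbor with strictly greater level, and $pred_1$ is false everywhere, disabling {\tt GA1}.

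The key step is to rule out $pred_2$. Here I would use the specification $spec(v)$, which in a fault-free tree states that for every neighbor $u$ of $v$, either $prnt_v=u$ or $prnt_u=v$. Take any neighbor $u\neq prnt_v$; then $spec(v)$ forces $prnt_u=v$, which by consistency means $r$-$prnt_{u,v}=true$. Since $pred_2$ requires a neighbor $u\neq prnt_v$ with $r$-$prnt_{u,v}=false$ (and equal level), no such neighbor exists, so $pred_2$ is false. This establishes that {\tt GA2} is disabled as well.

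Having shown all three guards false at every process, I conclude that no process is enabled in $\rho$, so under any schedule the configuration remains $\rho$ forever, proving the closure claim. I expect the main obstacle to be the $pred_2$ argument: one must invoke the specification $spec(v)$ (rather than just the level-equality property) and carefully translate between the process-level parent relation and the register-level boolean $r$-$prnt_{u,v}$, being careful about the direction of the parent pointer. The other two predicates are essentially bookkeeping against the consistency assumption, whereas $pred_2$ is where the structural content of a legitimate tree orientation is actually used.
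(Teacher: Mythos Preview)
Your proposal is correct and follows essentially the same approach as the paper: you show each of the three guards is false at every process by invoking level-equality for $pred_1$, the specification $spec(v)$ for $pred_2$, and the register-consistency assumption for $pred_3$. The paper's proof is terser and checks the guards in a different order, but the logical content is identical.
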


\begin{proof}
Consider any configuration $\rho$ in $\mathcal{LC}_0$. Since all variables $level_v$ have the same value, the guard of {\tt GA1} cannot be true in $\rho$. Since $spec(v)$ holds at every process in $\rho$, there exist no neighboring processes $u$ and $v$ such that $prnt_u \ne v$ and $prnt_v \ne u$ holds. It follows that the guard of {\tt GA2} cannot be true in $\rho$. Once each process executes an action, all the variables of its output registers are consistent with its local variables, and thus, the guard of {\tt GA3} cannot be true.
\end{proof}

For the case with a single Byzantine process, we define the following sets of configurations.

\begin{definition}[$\mathcal{LC}_1$]\label{def:legit1}
Let $z$ be the single Byzantine process in a tree system. A configuration is in the set $\mathcal{LC}_1$ if every subtree (or a connected component) of $S$-$\{z\}$ satisfies either the following (C1) or (C2).
\begin{enumerate}
\renewcommand{\labelenumi}{(C\arabic{enumi})}
\item
(a) $spec(u)$ holds for every correct process $u$,
(b) $prnt_v=z$ holds for the neighbor $v$ of $z$, and
(c) $level_w \ge level_x$ holds for any neighboring correct processes $w$ and $x$ where $w$ is nearer than $x$ to $z$. 
\item
(d) $spec(u)$ holds for every correct process $u$, and
(e) $level_v=level_w$ holds for any correct processes $v$ and $w$.
\end{enumerate}
\end{definition}

\begin{definition}[$\mathcal{LC}_2$]\label{def:legit2}
Let $z$ be the single Byzantine process in a tree system. A configuration is in the set $\mathcal{LC}_2$ if every subtree (or a connected component) of $S$-$\{z\}$ satisfies the condition (C1) of Definition \ref{def:legit1}.
\end{definition}

In any configuration of $\mathcal{LC}_2$, every subtree forms the rooted tree with the root process neighboring the Byzantine process $z$. For configurations of $\mathcal{LC}_2$, the following lemma holds.

\begin{lemma}\label{lem:st-legit}
Once protocol $ss$-$TO$ reaches a configuration $\rho$ of $\mathcal{LC}_2$, it remains in configurations of $\mathcal{LC}_2$ and, thus, no correct process $u$ changes $prnt_u$ afterward. That is, any configuration of $\mathcal{LC}_2$ is $(0,1)$-contained.
\end{lemma}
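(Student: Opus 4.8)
My plan is to treat this as a pure closure (invariance) argument: I would show that every single step of $ss$-$TO$ taken from a configuration of $\mathcal{LC}_2$ again lands in $\mathcal{LC}_2$ and leaves the $prnt$ variable of every correct process untouched. Since the only O-variable of tree orientation is $prnt$ and $spec$ depends only on the $prnt$ pointers, these two facts immediately yield that any configuration of $\mathcal{LC}_2$ is $(0,1)$-contained in the sense of Definition~\ref{def:cfcontained}. The structural fact I would lean on throughout is condition (C1) of Definition~\ref{def:legit1}: in each subtree of $S$-$\{z\}$ the pointers are oriented toward the Byzantine process $z$ (the neighbor $v$ of $z$ has $prnt_v=z$, and every other correct process points to its neighbor on the path toward $z$), and by (c) the level is non-increasing as one moves away from $z$. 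Hence the parent of a correct process always carries a level at least as large as the process' own, while every child carries a level no larger.

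First I would rule out rule \texttt{GA2} for every correct process. Its guard $pred_2$ demands a non-parent neighbor $u$ with $r$-$level_{u,v}=level_v$ and $r$-$prnt_{u,v}=$\textit{false}. In $\mathcal{LC}_2$ the only non-parent neighbors of a correct process are its children, and by (a) together with the orientation each child points back to the process, so $r$-$prnt=$\textit{true}; thus $pred_2$ is false and \texttt{GA2} can never fire. Next I would analyze \texttt{GA1}. Its guard $pred_1$ requires a neighbor reporting a strictly larger level; by (c) no child qualifies, so the only candidate direction is the parent. For an interior correct process the parent is then the unique maximizer of $r$-$level_{\cdot,v}$, so executing \texttt{GA1} re-selects the same $prnt$ and merely raises $level_v$ up to the parent's level. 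For the root $v$ of a subtree, the only neighbor that can exceed $level_v$ is $z$ itself, which is already $prnt_v$; so $prnt_v=z$ is preserved, and if $z$ instead reports a level $\le level_v$ then $pred_1$ fails at $v$ and \texttt{GA1} is disabled there.

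I would then verify that such a level-raising \texttt{GA1} step preserves (a), (b), (c): no pointer moves, so $spec$ and (b) are untouched; the new level equals the parent's level, so (c) still holds on the parent edge, and since levels only increase, the process still dominates each child's level. Finally, \texttt{GA3} changes neither $prnt$ nor $level$ (it only rewrites output registers to restore consistency), hence it is harmless for membership in $\mathcal{LC}_2$. Combining these observations, no applicable rule ever changes a correct process' parent and every step keeps the configuration in $\mathcal{LC}_2$, which by induction on the execution is exactly the claim, and the invariance of all $prnt$ pointers together with $spec$ gives $(0,1)$-containment.

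The main obstacle, and the place where I would spend the most care, is the interaction with the arbitrary behavior of $z$. Everything hinges on showing that $z$ can only \emph{pull} --- forcing level increases that propagate outward through \texttt{GA1} without ever dislodging a pointer --- and can never \emph{push}: it must never make \texttt{GA1} select a neighbor other than the current parent, nor make $pred_2$ true at the root $v$. Here the tie case ($level_w=level_v$, where \texttt{GA1} is disabled rather than enabled) and the strict case (where the parent, or $z$ for the root, is the unique maximizer) must be separated cleanly, and one has to confirm that no value reported by $z$ can ever cause $v$ to abandon $z$ as its parent.
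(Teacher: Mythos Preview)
Your proof is correct and follows essentially the same approach as the paper's: both are closure arguments that use condition~(C1) to show that \texttt{GA2} is disabled at every correct process and that any firing of \texttt{GA1} can only re-select the current parent (since children have level at most $level_v$, the only neighbor that can trigger $pred_1$ is the parent, which is then the unique maximizer). The sole stylistic difference is that the paper packages the argument as a contradiction---assume some correct process is the \emph{first} to change its pointer and derive $level_u \ge level_w$ against the guard of \texttt{GA1}---whereas you carry out a direct rule-by-rule invariance check; the underlying reasoning is identical.
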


\begin{proof}
Consider any execution $e$ starting from a configuration $\rho$ of $\mathcal{LC}_2$. In $\rho$, every subtree of $S-\{z\}$ forms the rooted tree with the root process neighboring the Byzantine process $z$. Note that, as long as no correct process $u$ changes $prnt_u$ in $e$, action {\tt GA2} cannot be executed at any correct process. On the other hand, if a process $u$ executes action {\tt GA1} in  $e$, $level_{prnt_u} \ge level_u$ necessarily holds immediately this action. Consequently, if we assume that no correct process $u$ changes $prnt_u$ in $e$ (by execution of {\tt GA1}) then every configuration of $e$ is in $\mathcal{LC}_2$. To prove the lemma, it remains to show that $e$ contains no activation of {\tt GA1} by a correct process. In the following, we show that any correct process $u$ never changes $prnt_u$ in $e$.

For contradiction, assume that a correct process $u$ changes $prnt_u$ first among all correct processes. Notice that every correct process $v$ can execute {\tt GA1} or {\tt GA3} but cannot change $prnt_v$ before $u$ changes $prnt_u$. Also notice that $u$ changes $prnt_u$ to its neighbor (say $w$) by execution of {\tt GA1} and $w$ is a correct process. From the guard of {\tt GA1}, $level_w > level_u$ holds immediately before $u$ changes $prnt_u$. On the other hand, since $w$ is a correct process, $w$ never changes $prnt_w$ before $u$. This implies that $prnt_w = u$ holds immediately before $u$ changes $prnt_u$, and thus $level_u \ge level_w$ holds. This is a contradiction. 
\end{proof}

Notice that a correct process $u$ may change $level_u$ by execution of {\tt GA1} even after a configuration of $\mathcal{LC}_2$. For example, when the Byzantine process $z$ increments $level_z$ infinitely often, every process $u$ may also increment $level_u$ infinitely often.
 
\begin{lemma}\label{lem:legit1}
Any configuration $\rho$ in $\mathcal{LC}_1$ is $(\Delta_z,1,0,1)$-time contained where $z$ is the Byzantine process.
\end{lemma}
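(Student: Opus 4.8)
The plan is to verify directly the four conditions in the definition of a $(\Delta_z,1,0,1)$-time contained configuration. Since $c=0$ and the single Byzantine process is $z$, the $0$-correct processes are exactly the correct processes, and the only O-variable is $prnt$. The backbone of the argument is the monotonicity of the $level$ variables of correct processes (a correct process only ever increases its level) together with the observation that $S-\{z\}$ decomposes into exactly $\Delta_z$ subtrees, one per neighbor of $z$. For condition~1, $0$-legitimacy is immediate: in both cases (C1) and (C2) of Definition~\ref{def:legit1}, $spec(u)$ holds at every correct process $u$. For $0$-stability I would examine the rules enabled at a correct process while $z$ is frozen (using the standing assumption that correct processes' registers are consistent) and show that level monotonicity forbids any parent change: in a subtree satisfying (C1) the orientation already points toward $z$, so an enabled {\tt GA1} at worst raises a level while re-selecting the \emph{same} parent, {\tt GA2} is disabled by the level ordering of condition~(c), and {\tt GA3} touches only registers; the (C2) case is symmetric.

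The central claims are conditions~4 and~3, both of which I would derive from a single invariant: \emph{once a correct process points along the path toward $z$ (equivalently, once its subtree satisfies (C1)), it never changes $prnt$ again.} This is the ``one-sided effect'' on which $ss$-$TO$ is designed: because $z$ can only inflate the levels it reports, it can pull its neighbor (and, by cascade, a whole path up to the former root link) toward itself, but monotonicity of correct levels forbids the reverse motion; this is essentially the content of Lemma~\ref{lem:st-legit} applied subtree by subtree. A correct process therefore changes $prnt$ only during the unique reorientation of its subtree from a (C2)-type to a (C1)-type configuration, which flips the pointers along one path exactly once, giving condition~4 with $k=1$. For condition~3 I would note that $S-\{z\}$ has exactly $\Delta_z$ connected components, that a subtree already in (C1) never causes a disruption, and that each remaining subtree is reoriented at most once; since a reorientation is triggered by a single level increase of $z$ and then runs to completion under weak fairness with no intervening $0$-legitimate and $0$-stable configuration, it is contained in a single $0$-disruption, so there are at most $\Delta_z$ disruptions. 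Condition~2 then follows from condition~4: the total number of parent changes is finite, so the configuration reached after the last of them is $0$-legitimate (reoriented subtrees now satisfy (C1), the others still satisfy (C2), hence it lies in $\mathcal{LC}_1$) and no correct process ever changes its parent again, even if $z$ acts forever.

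The main obstacle will be making the invariant of the previous paragraph fully rigorous in the presence of unbounded level inflation by $z$. I must argue that the cascade started by one pull reverses each pointer on the relevant path \emph{at most once} and terminates in a (C1) subtree (so that no process is pulled twice, and that distinct subtrees, lying in distinct components of $S-\{z\}$, cannot interact), and that the transient configurations during a cascade are neither $0$-legitimate nor $0$-stable, so that one reorientation corresponds to exactly one disruption rather than several. The bookkeeping that separates \emph{parent} changes (bounded, and which drive disruptions) from \emph{level} changes (possibly unbounded, but harmless) is precisely where the care is needed.
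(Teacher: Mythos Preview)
Your proposal is correct and follows essentially the same route as the paper: decompose $S-\{z\}$ into its $\Delta_z$ subtrees, invoke Lemma~\ref{lem:st-legit} to freeze every subtree already in condition~(C1), argue that a subtree in condition~(C2) can be reoriented toward $z$ at most once (the cascade after $y$ executes $prnt_y:=z$ flips each pointer on the path to the old root link exactly once, by level monotonicity), and bound the disruption count by the number of subtrees. The paper is terser---it does not walk through the four defining conditions one by one and does not separately justify that a single cascade cannot straddle two disruptions---but the governing invariant and the counting are identical; your one loose phrase is that the reorientation is ``triggered by a single level increase of $z$'', whereas in fact $y$ may switch to $z$ via {\tt GA2} (equal levels, $r$-$prnt_{z,y}=\textit{false}$) or via {\tt GA1} against $z$'s \emph{initial} register contents without $z$ ever acting, but this does not affect the bound.
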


\begin{proof}
Let $\rho$ be a configuration of $\mathcal{LC}_1$. Consider any execution $e$ starting from $\rho$. By the same discussion as the proof of Lemma~\ref{lem:st-legit}, we can show that any subtree satisfying (C1) at $\rho$ always keeps satisfying the condition and no correct process $u$ in the subtree changes $prnt_u$ afterward.

Consider a subtree satisfying (C2) at $\rho$ and let $y$ be the neighbor of the Byzantine process $z$ in the subtree. From the fact that variables $prnt_u$ form a rooted tree with a root link and all variables $level_u$ have the same value in the subtree at $\rho$, no process $u$ in the subtree changes $prnt_u$ or $level_u$ unless $y$ executes $prnt_y:=z$ in $e$. When $prnt_y:=z$ is executed, $level_y$ becomes larger than $level_u$ of any other process $u$ in the subtree. Since the value of variable $level_u$ of each correct process $u$ is non-decreasing, every correct neighbor (say $v$) of $y$ eventually executes $prnt_v:=y$ and $level_v:=level_y$ (by {\tt GA1}). By repeating the argument, we can show that the subtree eventually reaches a configuration satisfying (C1) in $O(d')$ rounds where $d'$ is the diameter of the subtree. It is clear that any configuration before reaching the first configuration satisfying (C1) is not in $\mathcal{LC}_1$, and that each process $u$ changes $prnt_u$ at most once during the execution. 

Therefore, any execution $e$ starting from $\rho$ contains at most $\Delta_z$ 0-disruptions where each correct process $u$ changes $prnt_u$ at most once.
\end{proof}

\subsubsection{Convergence of $ss$-$TO$}

We first show convergence of protocol $ss$-$TO$ to configurations of $\mathcal{LC}_0$ in a \emph{fault-free} case.

\begin{lemma}\label{lem:conv0}
In a fault-free tree system, protocol $ss$-$TO$ eventually reaches a configuration of $\mathcal{LC}_0$ from any initial configuration.
\end{lemma}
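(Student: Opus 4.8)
The plan is to prove convergence to $\mathcal{LC}_0$ in the fault-free case by identifying a quantity that is monotone under the dynamics and then arguing that the levels must equalize across the whole tree. First I would observe that the variable $level_v$ of each correct process is non-decreasing: rule {\tt GA1} copies a strictly larger neighboring level, rule {\tt GA2} increments the level by one, and rule {\tt GA3} leaves the level untouched. Hence $M = \max_{v\in P} level_v$ is non-decreasing along any execution. The first key step is to show that $M$ eventually stabilizes. The danger is that {\tt GA2} could keep incrementing levels forever, so I must argue that in a fault-free tree this cannot happen: {\tt GA2} fires at $v$ only when a same-level neighbor $u$ does not point to $v$, and since the system is a finite acyclic graph with no Byzantine source feeding ever-increasing values, any unbounded growth of levels would require an infinite ascending chain of {\tt GA1}/{\tt GA2} activations that the tree topology forbids. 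Concretely I would bound the number of {\tt GA2} executions: once two adjacent processes share the maximum level and mutually designate each other, no further {\tt GA1}/{\tt GA2} is enabled between them, and this ``agreement'' propagates.

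Once $M$ is fixed, the second step is to show that the set of processes at level $M$ forms a connected region that grows until a root link appears. The intended picture is that the processes realizing the maximum level eventually settle into a configuration where two neighbors at level $M$ point at each other (the root link) while everyone else at strictly smaller level points toward that region via {\tt GA1}. I would argue that as long as $spec(v)$ fails somewhere, some rule is enabled, so by weak fairness the system keeps moving; combined with the monotonicity and the eventual stabilization of $M$, the only fixed points reachable are those in which every process satisfies $spec(v)$ and, crucially, all levels are equal. The equality of all levels in the limit follows because any process strictly below the maximum that satisfies $spec$ must point to a parent of equal level (not a strictly higher one, or {\tt GA1} would be enabled), forcing a uniform level throughout the connected tree.

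The cleanest way to carry this out is probably to proceed by induction on distance from the eventual root link, mirroring the structure of Lemma~\ref{lem:legit1}: first show a root link (a mutually-pointing adjacent pair at the common maximal level) is established in finite time, then show that the ``correct orientation plus equal level'' property spreads outward one hop at a time, each hop stabilizing after a bounded number of activations by weak fairness. At the base, two centers (or the single center's chosen incident link) provide the root link; at the inductive step, a process adjacent to an already-stabilized subtree is eventually forced by {\tt GA1} to copy the stabilized level and orient toward it, after which it never moves again.

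The main obstacle I anticipate is ruling out infinite level growth driven by {\tt GA2} in the fault-free setting — that is, proving $M$ actually stabilizes rather than diverging. Monotonicity alone does not give boundedness, so I expect the heart of the argument to be a potential or counting argument showing that {\tt GA2} can fire only finitely often: each {\tt GA2} activation strictly increases the firing process's level to match-or-exceed a neighbor, and in a tree this ``chasing'' cannot continue indefinitely without producing the mutual-parent agreement that disables both {\tt GA1} and {\tt GA2}. Formalizing why the acyclicity of the tree (equivalently, the absence of any Byzantine process injecting unbounded levels) prevents a perpetual increase is the step I would spend the most care on; everything after $M$ stabilizes is a routine outward-propagation induction.
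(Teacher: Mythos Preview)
Your approach is genuinely different from the paper's, and the gap you yourself flag is real and unfilled. The paper does \emph{not} try to show that $M=\max_v level_v$ stabilizes and then propagate outward from a root link. Instead it argues by induction on the number $n$ of processes: pick any leaf $u$ with unique neighbor $v$; show that after finitely many steps $level_v\ge level_u$ and $prnt_u=v$ hold forever; then split on whether eventually $prnt_v\ne u$ forever (case~(a)) or $prnt_v=u$ forever (case~(b)). In case~(a), $u$ no longer influences $v$, so the induction hypothesis applied to $S-\{u\}$ gives convergence of the rest, after which $u$ catches up in one step. In case~(b), both endpoints of $(u,v)$ are frozen (neither can execute {\tt GA1}/{\tt GA2} without changing $prnt$), so $level_v$ is fixed, and a direct outward argument from $(u,v)$ finishes without using the induction hypothesis. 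This leaf-peeling structure sidesteps entirely the question of whether $M$ can diverge.

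Your plan, by contrast, hinges on proving that {\tt GA2} fires only finitely often, and ``the tree topology forbids an infinite ascending chain'' is not an argument: you have not exhibited a potential function or a well-founded measure that decreases. Acyclicity alone does not obviously bound the number of {\tt GA2} firings; what actually bounds it (see the paper's later round-complexity lemma) is a rather delicate induction on $h(v,u)$, the depth of the subtree hanging off $u$, which is essentially the same leaf-peeling idea in disguise. So the step you call ``the heart of the argument'' is exactly the step the paper's method is designed to avoid. A second, smaller issue: your remark that ``two centers provide the root link'' is a misconception here---the eventual root link is determined by the initial levels and the schedule, not by the graph-theoretic center, so you cannot anchor an induction at the center. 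If you want to salvage your outline, the cleanest fix is to abandon the global-$M$ viewpoint and adopt the paper's induction on $n$; once you have the (a)/(b) dichotomy at a leaf, everything you wrote about outward propagation in case~(b) becomes correct and easy.
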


\begin{proof}
We prove the convergence to a configuration of $\mathcal{LC}_0$ by induction on the number of processes $n$. It is clear that protocol $ss$-$TO$ reaches a configuration  of $\mathcal{LC}_0$ from any initial configuration in case of $n=2$.

Now assume that protocol $ss$-$TO$ reaches a configuration of $\mathcal{LC}_0$ from any initial configuration in case that the number of processes is $n-1$ (inductive hypothesis), and consider the case that the number of processes is $n$.

Let $u$ be any leaf process and $v$ be its only neighbor and $\rho$ be an arbitrary configuration.
In a first time, we show that any execution $e$ starting from $\rho$ reaches in a finite time a configuration such that $level_v \ge level_u$ holds. If this condition holds in $\rho$, we have the result. Otherwise ($level_v<level_u$), $u$ is continuously enabled by {\tt GA1} (until the condition is true). Hence, the condition becomes true (by an activation of $v$) or this action is executed by $u$ in a finite time. In both cases, we obtain that $level_v \ge level_u$ holds in at most one round.

After that, process $u$ can execute only guarded action {\tt GA1} or {\tt GA3} since $prnt_u=v$ always holds. Thus, after the first round completes, $prnt_u=v$ and $level_v \ge level_u$ always hold (indeed, $v$ can only increase its $level$ variable and $level$ variable of $u$ can only take greater values than $v$'s). It follows that $v$ never executes $prnt_v:=u$ in the second round and later. This implies that $e$ reaches in a finite time a configuration $\rho'$ such that (a) $prnt_v \ne u$ always holds after $\rho'$ , or (b) $prnt_v= u$ always holds after $\rho'$ (since $v$ cannot execute $prnt_v:=u$ after $\rho'$ if $prnt_v \ne u$).

In case (a), the behavior of $v$ after $\rho'$ is never influenced by $u$: $v$ behaves exactly the same even when $u$ does not exist. From the inductive hypothesis, protocol $ss$-$TO$ eventually reaches a configuration $\rho''$ such that $S-\{u\}$ satisfies the condition of $\mathcal{LC}_0$ and remains in $\rho''$ afterward (from Lemma \ref{lem:legit0}). After $u$ executes its action at $\rho''$, $level_u = level_v$ holds and thus the configuration of $S$ is in $\mathcal{LC}_0$.

Now consider case (b), where we do not use the inductive hypothesis. The fact that $prnt_v=u$ (and $prnt_u=v$) always holds after $\rho'$ implies that $level_v$ (and also $level_u$) remains unchanged after $\rho'$. Assume now that a neighbor $w~(\ne u)$ of $v$ satisfies continuously $level_w\neq level_v$ or $prnt_w\neq v$ from a configuration $\rho''$ of $e$ after $\rho'$. If $w$ satisfies continuously $level_w> level_v$ from $\rho''$, then $v$ executes {\tt GA1} in a finite time, this is a contradiction. If $w$ satisfies continuously $level_w< level_v$ from $\rho''$, then $w$ executes {\tt GA1} in a finite time and takes a $level$ value such that $level_w\geq level_v$, that contradicts the fact that $w$ satisfies continuously $level_w< level_v$ from $\rho''$. This implies that $level_w= level_v$ and $prnt_w= v$ in a finite time in any execution starting from $\rho'$. As $v$ does not modify its state after $\rho'$, $w$ is never enabled after $\rho'$. This implies that the fragment of $S$ consisting of processes within distance two from $u$ reaches a configuration satisfying the condition of $\mathcal{LC}_0$ and remains unchanged. We can now apply the same reasoning by induction on the distance of any process to $u$ and show that $ss$-$TO$ eventually reaches a configuration in $\mathcal{LC}_0$ where link $(u, v)$ is the root link.

Consequently, protocol $ss$-$TO$ reaches a configuration of $\mathcal{LC}_0$ from any initial configuration. 
\end{proof}

Now, we consider the case with a single Byzantine process.

\begin{lemma}\label{lem:conv1}
In a tree system with a single Byzantine process, protocol $ss$-$TO$ eventually reaches a configuration of $\mathcal{LC}_1$ from any initial configuration.
\end{lemma}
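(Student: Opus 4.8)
The plan is to prove convergence to $\mathcal{LC}_1$ by isolating the behavior around the single Byzantine process $z$ and analyzing each subtree of $S-\{z\}$ separately. The key structural observation is the one-sided nature of the Byzantine influence emphasized in the protocol design: since $level_u$ is non-decreasing for every correct process $u$ (rules {\tt GA1} and {\tt GA2} only raise or copy-up level values), $z$ can pull neighbors toward itself by raising its advertised level, but cannot force a neighbor to abandon its current parent by lowering a level. I would first fix a subtree $T$ of $S-\{z\}$ and let $y$ be the unique process in $T$ neighboring $z$. The analysis then splits on the long-run behavior of $z$ as seen by $y$ through the register $r$-$level_{z,y}$.

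First I would dispose of the case where $z$ raises its level (relative to $y$) only finitely often. After the last such increase, the value $r$-$level_{z,y}$ is bounded, so from some configuration on $y$ is never again enabled to execute {\tt GA1} with $z$ as the chosen neighbor. From that point the subtree $T$ evolves as though it were a fault-free tree on $T$ (with $y$ treating $z$ as a neighbor that offers no improving level), and I would invoke the fault-free convergence result, Lemma~\ref{lem:conv0}, suitably applied to $T$, to conclude that $T$ reaches a configuration satisfying condition (C2) of Definition~\ref{def:legit1}: all correct levels equalize and $spec$ holds throughout $T$. The care here is to argue that once $z$ stops improving, the presence of $z$ genuinely cannot disturb $T$, which follows from the non-decreasing-level invariant together with the argument already used inside the proof of Lemma~\ref{lem:conv0} (case (b)) for propagating stability outward from a fixed root link.

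The complementary case is when $z$ raises its level infinitely often. Here I would argue that $y$ executes $prnt_y:=z$ infinitely often, and more importantly that eventually $level_y$ exceeds the level of every other correct process in $T$; because levels are non-decreasing and $z$ keeps pushing $y$ upward, the ``highest'' level in $T$ becomes anchored at $y$. Then, exactly as in the propagation argument of Lemma~\ref{lem:legit1}, the high level spreads down from $y$: each correct neighbor $v$ of $y$ eventually executes {\tt GA1} to adopt $y$ as parent with the larger level, then its neighbors do the same, and so on. This drives $T$ toward condition (C1): $spec$ holds for every correct process, the neighbor $y$ of $z$ points to $z$, and levels are monotone along the path away from $z$. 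Applying this to every subtree of $S-\{z\}$ (each independently landing in (C1) or (C2)) yields a configuration in $\mathcal{LC}_1$.

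The main obstacle I anticipate is making the two-case dichotomy genuinely exhaustive and clean under an \emph{adversarial} Byzantine $z$ and a weakly-fair distributed daemon: $z$ need not commit to either behavior, and $r$-$level_{z,y}$ could oscillate. The precise statement must be phrased as ``either $r$-$level_{z,y}$ is eventually bounded, or it is unbounded,'' and in the unbounded case one must show the level \emph{seen and adopted} by $y$ (not merely advertised by $z$) is what grows, handling the asynchrony whereby $y$ reads a stale register value. I would lean on the invariant that, once a correct process adopts a parent at a given level, it never lowers its own level, so any increase advertised by $z$ that $y$ ever reads is permanently recorded upward in $T$; this monotonicity is what prevents the oscillation from ever undoing progress and is the crux that must be stated carefully.
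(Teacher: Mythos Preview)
Your approach diverges from the paper's in a substantive way. The paper does \emph{not} split on the behavior of the Byzantine process $z$; instead it argues by induction on the size $n'$ of each subtree $S'$ of $S-\{z\}$, peeling off a leaf $u$ of $S'$ that is \emph{not} adjacent to $z$. Since both $u$ and its unique neighbor $v$ are correct, the two-case analysis from the proof of Lemma~\ref{lem:conv0} applies verbatim to the pair $(u,v)$: eventually either $prnt_v\neq u$ forever (so $v$ ignores $u$ and the inductive hypothesis on $S'\setminus\{u\}$ yields (C1) or (C2)), or $prnt_v=u$ forever (and the case-(b) propagation argument from Lemma~\ref{lem:conv0} gives (C2) with root link $(u,v)$). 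This sidesteps any reasoning about what $z$ actually does.

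Your route, by contrast, conditions on $z$'s advertised level and has a concrete gap in Case~1. You claim that when $r$-$level_{z,y}$ is eventually bounded the subtree $T$ reaches (C2) via Lemma~\ref{lem:conv0}. That need not hold: if $z$ once advertised a large level and $y$ already executed $prnt_y:=z$ with that level before $z$ went quiet, then $prnt_y=z$ can persist indefinitely (nothing in {\tt GA1}/{\tt GA2} forces $y$ to abandon $z$ while $y$ holds the maximum level in $T$), and the subtree then converges to (C1), not (C2). More structurally, Lemma~\ref{lem:conv0} cannot be invoked on $T$ as stated, because $y$ retains the link to $z$ and may still have $prnt_y=z$; you would need a further sub-case split on whether $prnt_y=z$ eventually holds, with a separate argument for each branch. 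Your Case~2 also leaves unresolved precisely the obstacle you flag: under a weakly-fair daemon, unboundedness of the values \emph{written} by $z$ into $r_{z,y}$ does not imply unboundedness of the values \emph{read} by $y$, since $z$ may overwrite a high value before $y$ is scheduled. The paper's leaf-peeling induction avoids all of these difficulties by never examining $z$ at all.
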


\begin{proof}
Let $z$ be the Byzantine process, $S'$ be any subtree (or a connected component) of $S-\{z\}$ and $y$ be the process in $S'$ neighboring $z$ (in $S$).

We prove, by induction on the number of processes $n'$ of $S'$, that $S'$ eventually reaches a configuration satisfying the condition (C1) or (C2) of Definition \ref{def:legit1}.

It is clear that $S'$ reaches a configuration satisfying (C1) from any initial configuration in case of $n'=1$.

Now assume that $S'$ reaches a configuration satisfying (C1) or (C2) from any initial configuration in case of $n'=k-1$  (inductive hypothesis), and consider the case of $n'=k\ (\ge 2)$.

From $n' \ge 2$, there exists a leaf process $u$ in $S'$ that is not neighboring the Byzantine process $z$. Let $v$ be the neighbor of $u$. Since processes $u$ and $v$ are correct processes, we can show the following by the same argument as the fault-free case (Lemma \ref{lem:conv0}): after some configuration $\rho$, (a) $prnt_v \ne u$ always holds, or (b) $prnt_v= u$ always holds. In case (a), we can show from the inductive hypothesis that $S'$ eventually reaches a configuration satisfying (C1) or (C2). In case (b), we can show that $S'$ eventually reaches a configuration satisfying (C2) where link $(u, v)$ is the root link.

Consequently, protocol $ss$-$TO$ reaches a configuration of $\mathcal{LC}_1$ from any initial configuration.
\end{proof}

The following main theorem is obtained from Lemmas \ref{lem:legit0}, \ref{lem:st-legit}, \ref{lem:legit1}, \ref{lem:conv0} and \ref{lem:conv1}.

\begin{theorem}\label{th:strong}
Protocol $ss$-$TO$ is a $(\Delta,0,1)$-strongly stabilizing tree-orientation protocol.
\end{theorem}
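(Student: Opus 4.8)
The plan is to assemble Theorem~\ref{th:strong} directly from the five lemmas just proved, matching each clause of the definition of a $(\Delta,0,1)$-strongly stabilizing protocol against an already-established fact. Recall that to be $(\Delta,0,1)$-strongly stabilizing, starting from any configuration with at most one Byzantine process, every execution must reach a $(\Delta,k,0,1)$-time contained configuration within finitely many rounds. The natural strategy is to separate the two cases that the model allows: the fault-free case (zero Byzantine processes, which a protocol tolerating ``at most one'' must also handle) and the single-Byzantine case.

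First I would dispose of the fault-free case. By Lemma~\ref{lem:conv0}, protocol $ss$-$TO$ reaches a configuration of $\mathcal{LC}_0$ from any initial configuration, and by Lemma~\ref{lem:legit0} every such configuration is stable (the protocol remains there forever). Since no process ever changes $prnt_v$ or $level_v$ afterward, such a configuration is trivially $0$-legitimate, $0$-stable, and generates no $0$-disruptions; it is therefore $(0,0,0,0)$-time contained, which is subsumed by the required $(\Delta,k,0,1)$ bound.

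Next I would handle the single-Byzantine case, which is the substantive direction. By Lemma~\ref{lem:conv1}, the protocol reaches a configuration of $\mathcal{LC}_1$ from any initial configuration. Then Lemma~\ref{lem:legit1} states that any configuration of $\mathcal{LC}_1$ is $(\Delta_z,1,0,1)$-time contained, where $z$ is the Byzantine process. Since $\Delta_z \le \Delta$, the number of $0$-disruptions is at most $\Delta$ and each correct process changes its O-variable at most once, so the reached configuration is $(\Delta,1,0,1)$-time contained. (Lemma~\ref{lem:st-legit} underpins this by guaranteeing the $(0,1)$-containment of the $\mathcal{LC}_2$ subtrees that $\mathcal{LC}_1$ absorbs into.) Combining both cases, in every execution with at most one Byzantine process the protocol reaches a $(\Delta,k,0,1)$-time contained configuration with $k\le 1$, which is exactly the definition of $(\Delta,0,1)$-strong stabilization.

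The only genuine subtlety—and the step I would state most carefully rather than compute—is verifying that the stabilization-time clause (reaching the time-contained configuration within finitely many rounds, i.e.\ $l$ finite) follows: the convergence lemmas assert that a configuration of $\mathcal{LC}_0$ or $\mathcal{LC}_1$ is \emph{eventually} reached, and the round bounds appearing inside the proof of Lemma~\ref{lem:legit1} (of order $O(d')$ per subtree) must be cited to conclude $l$ is finite. This is not a hard obstacle, since all the quantitative work lives in the lemmas; the theorem proof itself is essentially bookkeeping, selecting the correct lemma for each case and observing that $\Delta_z\le\Delta$ and $k\le 1$ meet the claimed parameters.
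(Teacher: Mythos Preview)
Your proposal is correct and takes essentially the same approach as the paper: the paper's own proof is a single sentence stating that the theorem is obtained from Lemmas~\ref{lem:legit0}, \ref{lem:st-legit}, \ref{lem:legit1}, \ref{lem:conv0} and \ref{lem:conv1}, and you have simply unpacked how those lemmas fit together (fault-free versus single-Byzantine case, with the observation $\Delta_z\le\Delta$). Your added remark about the finiteness of $l$ is a reasonable caution, though the paper defers the explicit round bounds to the subsequent subsection on round complexity.
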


\subsubsection{Round Complexity of $ss$-$TO$}

In this subsection, we focus on the round complexity of $ss$-$TO$. First, we show the following lemma.

\begin{lemma}\label{lem:roundp}
Let $v$ and $u$ be any neighbors of $S$. Let $S'$ be the subtree of $S-\{v\}$ containing $u$ and $h(v,u)$ be the largest distance from $v$ to a leaf process of $S'$. If $S' \cup \{v\}$ contains no Byzantine process, $prnt_v := u$ of {\tt GA1} or {\tt GA2} can be executed only in the first $2 h(v,u)$ rounds. Moreover, in round $2 h(v,u)$+1 or later, $level_v$ remains unchanged as long as $prnt_v = u$ holds.
\end{lemma}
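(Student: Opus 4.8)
The plan is to prove Lemma~\ref{lem:roundp} by induction on $h(v,u)$, mirroring the structure of the convergence proofs (Lemmas~\ref{lem:conv0} and~\ref{lem:conv1}) but tracking rounds explicitly. The two claims to establish are: (i) $prnt_v := u$ (via {\tt GA1} or {\tt GA2}) can only be executed within the first $2h(v,u)$ rounds, and (ii) once round $2h(v,u)+1$ begins, $level_v$ is frozen as long as $prnt_v = u$. The key structural fact I would exploit is that the subtree $S' \cup \{v\}$ is Byzantine-free, so every process in it executes $ss$-$TO$ correctly, and the non-decreasing nature of $level$ variables (guaranteed by the guards of {\tt GA1} and {\tt GA2}) drives a wave of stabilization inward from the leaves of $S'$ toward $v$.

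First I would handle the base case $h(v,u) = 1$, meaning $u$ is a leaf of $S'$ (so $S' = \{u\}$). Here $u$ has $v$ as its only neighbor in $S' \cup \{v\}$. By the argument used in Lemma~\ref{lem:conv0}, within the first round we reach $level_v \ge level_u$ and $prnt_u = v$ stabilizes; after that $u$ can only execute {\tt GA1} or {\tt GA3} and never forces $prnt_v := u$ dishonestly. I would argue that within $2$ rounds either $prnt_v = u$ has settled or $v$ has moved its pointer elsewhere, and no further $prnt_v := u$ transition from $u$'s side can occur. For the inductive step, I fix a neighbor $w \ne v$ of $u$ inside $S'$; the subtree of $S - \{u\}$ containing $w$ has height $h(u,w) \le h(v,u) - 1$, so by the induction hypothesis the edge-behavior of $w$ toward $u$ stabilizes within $2(h(v,u)-1)$ rounds and $level_w$ freezes thereafter (while $prnt_w = u$ holds). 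Once all such inner neighbors $w$ of $u$ have frozen levels, $level_u$ can change at most a bounded number of further times, and I would show that $u$'s level stabilizes within $2(h(v,u)-1)+1$ rounds, so that any transition $prnt_v := u$ triggered by a genuine level increase at $u$ is confined to the first $2h(v,u)$ rounds. The second claim follows because, after this window, the only neighbor of $v$ that can still raise its level toward $v$ lies outside $S'$ (on the Byzantine side), and that cannot cause $v$ to change its level while $prnt_v = u$ is maintained.

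The subtle bookkeeping — and the main obstacle — is the passage from ``$level_w$ frozen for all inner neighbors $w$'' to ``$level_u$ frozen,'' together with accounting for the extra single round needed for $u$ to react to its neighbors. I would argue that once every inner neighbor's level is fixed, {\tt GA1} can fire at $u$ at most once more (to match the maximum inner level) and {\tt GA2} at most once more (to break a same-level symmetry by incrementing), and after these $u$'s level is stable; charging these to one additional round and invoking weak fairness yields the $2h(v,u)$ bound. The nontrivial point is ensuring the round count composes correctly across the induction: each level of depth costs exactly two rounds (one for the inward information to arrive at the deeper neighbor, one for $u$ itself to react), which is precisely why the bound is $2h(v,u)$ rather than $h(v,u)$. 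For the ``moreover'' part, I would note that any neighbor of $v$ lying in $S'$ has a frozen level by the inductive conclusion, so the only remaining source of a level increase at $v$ is a neighbor outside $S'$; since $prnt_v = u \in S'$ is assumed to hold, such an outside neighbor cannot be the parent, and the guards of {\tt GA1}/{\tt GA2} that would raise $level_v$ while keeping $prnt_v = u$ are no longer enabled, giving the claimed stability.
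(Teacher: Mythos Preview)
Your overall framework (induction on $h(v,u)$, base case $h(v,u)=1$) matches the paper, but the inductive step contains a genuine gap.

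You invoke the induction hypothesis on the pair $(u,w)$ for an inner neighbor $w$ of $u$ and conclude that ``$level_w$ freezes thereafter (while $prnt_w = u$ holds)''. That is not what the hypothesis says: applied to $(u,w)$ it bounds when $prnt_u := w$ can occur and says $level_u$ is frozen while $prnt_u = w$. More importantly, your central claim that ``$u$'s level stabilizes within $2(h(v,u)-1)+1$ rounds'' is simply false. Process $u$ also has $v$ as a neighbor, and $v$ may sit on the Byzantine side of the tree; if $level_v$ increases unboundedly, $u$ can repeatedly execute {\tt GA1} with $prnt_u := v$ and raise $level_u$ without bound. The lemma does not assert that $level_u$ stabilizes --- only the conditional statement that $level_v$ is frozen \emph{while $prnt_v = u$}. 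Your forward-stabilization argument therefore cannot close.

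The paper's proof avoids this by arguing backwards. It assumes $prnt_v := u$ is executed in some round $r$, observes that this forces $level_u$ to have increased since $v$'s previous action (hence in round $r-1$ or $r$), and notes that any such increase of $level_u$ comes with an assignment $prnt_u := w$. If $w \ne v$ then $h(u,w) < h(v,u)$ and the induction hypothesis bounds that action by round $2h(u,w)$, giving $r \le 2h(u,w)+1 < 2h(v,u)$. The delicate case $w = v$ (possible only under {\tt GA2}) needs one more backward step to the previous action $prnt_u := w'$ with $w' \ne v$, costing one extra round and yielding $r \le 2h(u,w')+2 \le 2h(v,u)$. The ``moreover'' clause is then handled by a short case split on whether $prnt_u = v$ or $prnt_u = w \ne v$ at the end of round $2h(v,u)$, again using the induction hypothesis conditionally rather than any unconditional freezing of $level_u$.
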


\begin{proof}
We prove the lemma by induction on $h(v,u)$.

First consider the case of $h(v,u)=1$, where $u$ is a leaf process. When the first round completes, all the output registers of every process becomes consistent with the process variables. Since $u$ is a leaf process, $prnt_u = v$ always holds. It follows that process $v$ can execute $prnt_v := u$ only in {\tt GA1}. Once $v$ executes its action in the second round, $level_v \ge level_u$ holds and $prnt_v := u$ of {\tt GA1} cannot be executed afterward (see proof of Lemma \ref{lem:conv0}). Thus, $prnt_v := u$ of {\tt GA1} can be executed only in the first and second rounds. It is clear that in round $3$ or later, $level_v$ remains unchanged as long as $prnt_v = u$ holds.

We assume that the lemma holds when $h(v,u) \le k-1$ (inductive hypothesis) and consider the case of $h(v,u) = k$. We assume that $prnt_v := u$ of {\tt GA1} or {\tt GA2} is executed in round $r$, and show that $r \le 2k$ holds in the following. Variable $level_v$ is also incremented in the action, and let $\ell$ be the resultant value of $level_v$. In the following, we consider two cases.

\begin{itemize}
\item Case that $prnt_v := u$ of {\tt GA1} is executed in round $r$: when $prnt_v := u$ is executed, $level_u=\ell$ holds. But $level_u<\ell$ holds when $v$ executes its action in round $r-1$; otherwise, $v$ reaches a state with $level_v \ge \ell$ in round $r-1$ and cannot execute $prnt_v := u$ (with $level_v := \ell$) in round $r$. This implies that $u$ incremented $level_u$ to $\ell$ in round $r-1$ or $r$.

In the case that $u$ makes the increment of $level_u$ by {\tt GA1}, $u$ executes $prnt_u :=w$ for $w\ (\ne v)$ in the same action. Since $h(u,w)<h(v,u)$ holds, the action is executed in the first $2 h(u,w)$ rounds from the inductive hypothesis. Consequently, $prnt_v := u$ of {\tt GA1} is executed in round $2 h(u,w)+1\ (< 2 h(v,u))$ at latest.

In the case that $u$ makes the increment of $level_u$ by {\tt GA2}, $u$ executes $prnt_u :=w$ for some $w\ (\in N_u)$ in the same action, where $w=v$ may hold. For the case of $w \ne v$, we can show, by the similar argument to the above, that $prnt_v := u$ is executed in round $2 h(u,w)+1\ (< 2 h(v,u))$ at latest. Now consider the case of $w=v$. Then $level_v = level_u = \ell -1$, $prnt_v \ne u$ and $prnt_u \ne v$ hold immediately before $u$ executes $prnt_u:=v$ and $level_u:=\ell$. Between the actions of $level_u := \ell -1$ (with $prnt_u := w\ (w \ne v)$) and $level_u := \ell$ (with $prnt_u := v$), $v$ can execute its action at most once; otherwise, $level_v \ge \ell-1$ holds after the first action, and $level_v \ge \ell$ or $prnt_v = u$ holds after the second action. This implies that $level_u := \ell -1$ with $prnt_u := w\ (w \ne v)$ is executed in the previous or the same round as the action of $level_u := \ell$, and thus, in round $r-2$ or later. Since $h(u,w)<h(v,u)$ holds, the action is executed in the first $2 h(u,w)$ rounds from the inductive hypothesis. Consequently, $prnt_v := u$ of {\tt GA1} is executed in round $2 h(u,w)+2\ (\le 2 h(v,u))$ at latest.

\item Case that $prnt_v := u$ is executed in {\tt GA2}: then $level_v = level_u = \ell -1$, $prnt_v \ne u$ and $prnt_u \ne v$ hold immediately before $v$ executes $prnt_v:=u$ and $level_v:=\ell$. Between the executions of $level_v := \ell -1$ and $level_v := \ell$, $u$ can execute its action at most once, and $u$ executes $prnt_u := w$ for some $w\ (\ne v)$ in the action.Since $h(u,w)<h(v,u)$ holds, this action is executed in the first $2 h(u,w)$ rounds from the inductive hypothesis. Consequently, $prnt_v := u$ is executed in round $2 h(u,w)+1\ (< 2 h(v,u))$. 
\end{itemize}

It remains to show that $level_v$ remains unchanged in round $2 h(v,u)$+1 or later, as long as $prnt_v = u$ holds. Now assume that $prnt_v=u$ holds at the end of round $2 h(v,u)$.

\begin{itemize}
\item Case that $prnt_u=v$ holds at the end of round $2 h(v,u)$: since $h(u,w) < h(v,u)$ for any $w \in N_u-\{v\}$, $prnt_u := w$ cannot be executed in round $2 h(v,u)+1$ or later from the inductive hypothesis, and so $prnt_u=v$ holds afterward. Thus, it is clear that $level_v$ remains unchanged as long as $prnt_v=u$ (and $prnt_u=v$) holds.

\item Case that $prnt_u \ne v$ holds at the end of round $2 h(v,u)$: let $prnt_u = w$ hold for some $w \in N_u-\{v\}$ at the end of round $2 h(v,u)$. Since $h(u,w) < h(v,u)$, $level_u$ remains unchanged as long as $prnt_u = w$ holds from the inductive hypothesis. It follows that $level_v$ remains unchanged as long as $prnt_v=u$ and $prnt_u = w$ hold. Since $h(u,x) < h(v,u)$ for any $x \in N_u-\{v\}$, $prnt_u := x$ cannot be executed in round $2 h(v,u)+1$ or later, but $prnt_u := v$ can be executed. Immediately after execution of $prnt_u := v$, $level_v=level_u$ holds if $prnt_v$ remains unchanged. Thus, it is clear that $level_v$ remains unchanged as long as  $prnt_v=u$  (and $prnt_u=v$) holds.
\end{itemize}
\end{proof}

The following lemma holds for the fault-free case.

\begin{lemma}\label{lem:round0}
In a fault-free tree system, protocol $ss$-$TO$ reaches a configuration of $\mathcal{LC}_0$ from any initial configuration in $O(d)$ rounds where $d$ is the diameter of the tree system $S$. 
\end{lemma}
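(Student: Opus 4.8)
The plan is to upgrade the purely qualitative convergence of Lemma~\ref{lem:conv0} into a quantitative $O(d)$ bound by applying Lemma~\ref{lem:roundp} directly. Since the system is fault-free, the hypothesis $S'\cup\{v\}$ contains no Byzantine process is met for every ordered pair of neighbors $(v,u)$. For a fixed process $v$ and any neighbor $u$, the quantity $h(v,u)$ is the distance from $v$ to the farthest leaf of the subtree hanging through $u$, hence it is bounded by the eccentricity of $v$ and in particular by the diameter $d$. Consequently, by the first assertion of Lemma~\ref{lem:roundp}, the assignment $prnt_v:=u$ (via {\tt GA1} or {\tt GA2}) can be executed only during the first $2h(v,u)\le 2d$ rounds; taking the maximum over the neighbors of $v$, no process changes its parent after round $2d$. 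Thus there is a round $R\le 2d$ after which every $prnt_v$ is frozen.

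Next I would freeze the levels as well. For each process $v$, let $u=prnt_v$ be its now-fixed parent. Since $prnt_v=u$ holds continuously throughout all rounds after $R$ and $R\ge 2h(v,u)$, the second assertion of Lemma~\ref{lem:roundp} yields that $level_v$ remains unchanged from round $2h(v,u)+1\le 2d+1$ onward. Hence after $2d$ rounds both $prnt_v$ and $level_v$ are fixed for every process, so the configuration $\rho^\ast$ reached at the end of round $2d$ is completely frozen. (By the standing convention stated before Definition~\ref{def:legit0}, after the first round every output register is consistent with its process variables, so $r$-$level_{u,v}=level_u$ and $r$-$prnt_{u,v}=(prnt_u=v)$; this only adds $O(1)$ to the round count.)

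It then remains to prove $\rho^\ast\in\mathcal{LC}_0$, which I would establish by a guard analysis showing that $pred_1$ and $pred_2$ are false at every process in $\rho^\ast$. If some $v$ had a neighbor $u$ with $level_u>level_v$, then $pred_1$ would hold and, by weak fairness, $v$ would eventually fire {\tt GA1} and strictly increase $level_v$, contradicting that levels are frozen; hence every edge joins processes of equal level, and by connectivity all levels are equal, giving condition~(b) of Definition~\ref{def:legit0}. With all levels equal, the guard $pred_2$ at $v$ reduces to the existence of a neighbor $u\neq prnt_v$ with $prnt_u\neq v$; were such a $u$ to exist, then $\neg pred_1\wedge pred_2$ would enable {\tt GA2}, whose execution increments $level_v$, again contradicting the freezing of levels. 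Therefore $pred_2$ is false everywhere, which means every neighbor of $v$ other than $prnt_v$ designates $v$ as its parent, so $spec(v)$ holds at every process; this is condition~(a). Thus $\rho^\ast\in\mathcal{LC}_0$ is reached within $O(d)$ rounds, and by the closure Lemma~\ref{lem:legit0} the protocol stays there.

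The main obstacle I anticipate is the bookkeeping in the second step: Lemma~\ref{lem:roundp} freezes $level_v$ only \emph{while} $prnt_v=u$ holds, so the argument must genuinely first establish that parents are frozen (first assertion) before invoking level-stability (second assertion), and one must invoke the register-consistency assumption to make the final guard analysis legitimate. Beyond the elementary estimate $h(v,u)\le d$, no delicate calculation is required, and the proof reduces to combining Lemma~\ref{lem:roundp} with Lemma~\ref{lem:legit0}.
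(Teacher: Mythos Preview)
Your proof is correct and shares the same core with the paper's: both invoke Lemma~\ref{lem:roundp} with the bound $h(v,u)\le d$ to conclude that every $prnt_v$ and $level_v$ is frozen after $2d$ rounds. The only difference lies in the last step. The paper simply observes that once the configuration is frozen, the qualitative convergence Lemma~\ref{lem:conv0} forces that frozen configuration to be in $\mathcal{LC}_0$ (since any execution must eventually reach $\mathcal{LC}_0$, and a frozen configuration outside $\mathcal{LC}_0$ never would). You instead re-derive membership in $\mathcal{LC}_0$ by a direct guard analysis showing $pred_1$ and $pred_2$ are false everywhere. Your route is more self-contained and makes the argument independent of Lemma~\ref{lem:conv0}; the paper's route is shorter because it recycles that lemma. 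Both are entirely valid.
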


\begin{proof}
Lemma \ref{lem:roundp} implies that, after round $2d+1$ or later, no process $v$ changes $prnt_v$ or $level_v$ and thus the configuration remains unchanged. Lemma \ref{lem:conv0} guarantees that the final configuration is a configuration in $\mathcal{LC}_0$.
\end{proof}

For the single-Byzantine case, the following lemma holds.

\begin{lemma}\label{lem:round1}
In a tree system with a single Byzantine process, protocol $ss$-$TO$ reaches a configuration of $\mathcal{LC}_1$ from any initial configuration in $O(n)$ rounds.
\end{lemma}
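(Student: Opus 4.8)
The plan is to reuse Lemma~\ref{lem:roundp} as the engine, exactly as the fault-free Lemma~\ref{lem:round0} does, but to isolate the single Byzantine process $z$ and treat its influence separately. First I would decompose $S-\{z\}$ into its connected components (subtrees) $S'_1,\ldots,S'_m$, each attached to $z$ through a unique neighbor $y_j$, and argue that since these subtrees interact with the rest of the system only through $z$, it suffices to bound, for each $S'_j$, the number of rounds until it satisfies (C1) or (C2) of Definition~\ref{def:legit1}; as every subtree has at most $n-1$ processes, an $O(n)$ bound per subtree yields the claim.

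Next, within a fixed subtree $S'$ with attaching neighbor $y$, I would classify each link $(v,u)$ incident to a correct process $v$ as \emph{outward} (the component of $S-\{v\}$ containing $u$ does not contain $z$) or \emph{inward} (it does). For every outward link, $S'_{vu}\cup\{v\}$ is Byzantine-free, so Lemma~\ref{lem:roundp} applies and $prnt_v:=u$ can be executed only during the first $2h(v,u)\le 2n$ rounds, after which $level_v$ is frozen while $prnt_v=u$. Hence after round $2n+1$ no correct process ever sets its parent to an outward neighbor, and its level can change only through an inward interaction. Since each process has a unique inward neighbor, this already forces every parent change after round $2n+1$ to be a single, irreversible switch toward $z$: a process whose parent points inward is frozen, and one whose parent points outward can switch inward at most once and never back.

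Then I would analyse the residual inward dynamics, which are driven solely by $z$ through $y$ (the only correct process whose level can still increase after round $2n+1$ is $y$, because every other process with an inward-pointing, hence frozen, parent has a frozen level). Using weak fairness (one hop of progress per round) together with the monotone, one-sided nature of $level$, I would show that $z$'s pull creates a wave that propagates outward from $y$ at one hop per round: either $z$ raises $y$ above its neighbors, triggering \texttt{GA1} along successive hops until every process of $S'$ points toward $z$, giving (C1) in $O(d')\le O(n)$ rounds; or $z$ never overtakes the $S'$-side of $y$, in which case $y$'s parent selection is itself governed by outward links and the subtree settles into an internal root link, giving (C2) in $O(d')$ rounds by the fault-free argument behind Lemma~\ref{lem:round0}. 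Lemma~\ref{lem:conv1} guarantees that one of the two is indeed reached, so the round count is what remains to be certified.

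The hard part will be the last step: Lemma~\ref{lem:roundp} gives no bound on inward links, so the whole Byzantine influence is concentrated there and must be controlled by hand. The delicate point is ruling out that $z$ repeatedly pulls $y$ a little and lets the subtree partially re-settle, which a priori could spread the at-most-$n$ irreversible inward switches over super-linearly many rounds. The key observation that makes this go through is that after round $2n+1$ the outward freeze of Lemma~\ref{lem:roundp} forbids any process (including $y$) from reverting to an outward parent; combined with non-decreasing levels, this makes each inward switch permanent and confines the activity to a single monotone outward-propagating wave, which fairness resolves within a linear number of rounds. Making this wave argument fully rigorous — in particular accounting for stalled plateaus where deep processes already hold high levels — is where the real work lies.
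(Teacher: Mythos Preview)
Your plan is correct and follows essentially the same route as the paper: decompose $S-\{z\}$ into subtrees, invoke Lemma~\ref{lem:roundp} on the Byzantine-free (outward) side, and then argue that the inward dynamics settle into (C1) or (C2) within a linear number of rounds. The paper resolves the ``hard part'' you flag by making the case split explicitly and locally at the neighbor $v$ of $z$ over a fixed time window: either $prnt_v$ remains outward throughout rounds $2d'{+}1$ to $3d'$ (in which case $level_v$ is frozen by the second clause of Lemma~\ref{lem:roundp} and the argument behind that lemma yields (C2) by round $3d'$), or $prnt_v=z$ occurs in that window, after which $prnt_v=z$ is permanent and a straightforward induction on distance from $v$ gives (C1) within $d'$ further rounds---so the wave argument is less delicate than you anticipate, and your worry about $z$ ``repeatedly pulling $y$ a little'' is foreclosed precisely because once $prnt_y=z$ it never reverts.
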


\begin{proof}
Let $z$ be the Byzantine process and $S'$ be any subtree of $S -\{z\}$. Let $v$ be the neighbor of $z$ in $S'$. From Lemma \ref{lem:roundp}, $v$ cannot execute $prnt_v := w$ for any $w \in N_v-\{z\}$ in round $2 d'+1$ or later, where $d'$ is the diameter of $S'$. We consider the following two cases depending on $prnt_v$.

\begin{itemize}
\item Case 1: there exists $w \in N_v-\{z\}$ such that $prnt_v =w$ at the end of round $2d'$ and $prnt_v$ remains unchanged during the following $d'$ rounds (from round $2d'+1$ to round $3d'$).

From Lemma \ref{lem:roundp}, $level_v$ also remains unchanged during the $d'$ rounds. By the similar discussion to that in proof of Lemma \ref{lem:roundp}, we can show that $S'$ reaches a configuration satisfying the condition (C2) of Definition \ref{def:legit1} by the end of round $3d'$.

\item Case 2: $prnt_v=z$ at the end of round $2d'$ or there exists at least one configuration during the following $d'$ rounds (from round $2d'+1$ to round $3d'$) such that $prnt_v=z$ holds.

Let $c$ be the configuration where $prnt_v=z$ holds. From Lemma \ref{lem:roundp}, $prnt_v=z$ always holds after $c$. We can show, by induction of $k$ that, a fraction of $S'$  consisting of processes with distance up to $k$ from $v$ satisfies the condition (C1) at the end of $k$ rounds after $c$. Thus, $S'$ reaches a configuration satisfying the condition (C1) of Definition \ref{def:legit1} by the end of round $4d'$.
\end{itemize}

After a subtree reaches a configuration satisfying the condition (C2), its configuration may change into one satisfying the condition (C1) and the configuration may not satisfy (C1) or (C2) during the transition. However, Lemma \ref{lem:legit1} guarantees that the length of the period during the subtree does not satisfy (C1) or (C2) is $O(d')$ rounds, where $d'$ is the diameter of the subtree. Since the total of diameters of all the subtrees in $S-\{z\}$ is $O(n)$, the convergence to a configuration of $\mathcal{LC}_1$ satisfying (C1) or (C2) can be delayed at most $O(n)$ rounds.
\end{proof}

Finally, we can show the following theorem.

\begin{theorem}\label{th:round}
Protocol $ss$-$TO$ is a $(\Delta,0,1)$-strongly stabilizing tree-orientation protocol. The protocol reaches a configuration of $\mathcal{LC}_0 \cup \mathcal{LC}_1$ from any initial configuration. The protocol may move from a legitimate configuration to an illegitimate one because of the influence of the Byzantine process, but it can stay in illegitimate configurations during the total of $O(n)$ rounds (that are not necessarily consecutive) in the whole execution.   
\end{theorem}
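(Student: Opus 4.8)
The plan is to treat the statement as a synthesis of results already in hand rather than a fresh argument. The first sentence is exactly Theorem~\ref{th:strong}, so nothing new is needed there. The second sentence follows by case analysis on the presence of a Byzantine process: in the fault-free case Lemma~\ref{lem:conv0} gives convergence to $\mathcal{LC}_0$, and in the single-Byzantine case Lemma~\ref{lem:conv1} gives convergence to $\mathcal{LC}_1$; together these establish that every execution reaches $\mathcal{LC}_0 \cup \mathcal{LC}_1$. The real work is the third sentence, namely the $O(n)$ bound on the total (not necessarily consecutive) number of rounds spent in illegitimate configurations over the whole execution.

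For the round bound I would again split on the number of Byzantine processes. In the fault-free case, Lemma~\ref{lem:round0} bounds the time to first reach $\mathcal{LC}_0$ by $O(d)=O(n)$ rounds, and Lemma~\ref{lem:legit0} shows $\mathcal{LC}_0$ is closed, so no illegitimate rounds occur afterward; the total is thus $O(n)$. In the single-Byzantine case, Lemma~\ref{lem:round1} bounds the time to first reach $\mathcal{LC}_1$ by $O(n)$ rounds, which covers the initial illegitimate phase. It then remains to account for the illegitimate rounds that occur \emph{after} $\mathcal{LC}_1$ has been reached, i.e.\ the Byzantine-triggered disruptions.

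The key observation for this last part is a ``once per subtree'' argument. After $\mathcal{LC}_1$ is reached, every connected component $S'$ of $S-\{z\}$ satisfies (C1) or (C2); by Lemma~\ref{lem:st-legit} a component in (C1) (equivalently, in $\mathcal{LC}_2$) is $(0,1)$-contained and hence never leaves (C1), so it contributes no further illegitimate rounds. A component in (C2) can be disrupted by the Byzantine process $z$, but Lemma~\ref{lem:legit1} shows such a component reaches (C1) in $O(d')$ rounds, where $d'$ is its diameter, with each correct process changing its parent at most once. Since a component that has reached (C1) stays in (C1) forever, each of the at most $\Delta_z$ components undergoes such a transition at most once. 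Summing the per-component cost gives $O\bigl(\sum_{S'} d'\bigr)$, and because the components of $S-\{z\}$ partition the correct processes of a tree, their diameters sum to at most $n-1=O(n)$; hence the total post-convergence illegitimate time is $O(n)$, and adding the $O(n)$ initial phase yields the claimed bound.

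I expect the main obstacle to be making the accounting airtight: one must verify that the only source of illegitimacy after $\mathcal{LC}_1$ is a genuine (C2)$\to$(C1) transition (and not, say, a level change inside an otherwise stable component, which is not a $prnt$-change and therefore produces no $0$-disruption), and that such transitions cannot recur within a given subtree. This is precisely where Lemma~\ref{lem:st-legit} (permanence of (C1)) and Lemma~\ref{lem:legit1} (the $O(d')$ duration together with the at-most-once parent change) must be invoked carefully, along with the bound $\sum_{S'} d' = O(n)$ over the components of $S-\{z\}$.
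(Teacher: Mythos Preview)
Your proposal is correct and follows essentially the same route as the paper: invoke Theorem~\ref{th:strong} for strong stabilization, Lemmas~\ref{lem:round0} and~\ref{lem:round1} for the $O(n)$ convergence time to $\mathcal{LC}_0\cup\mathcal{LC}_1$, and then Lemma~\ref{lem:legit1} together with the observation that the subtree diameters of $S-\{z\}$ sum to $O(n)$ to bound the total post-convergence illegitimate time. Your write-up is in fact slightly more careful than the paper's own proof in that you make explicit the role of Lemma~\ref{lem:st-legit} (permanence of (C1)) to ensure each subtree undergoes at most one (C2)$\to$(C1) transition.
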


\begin{proof}
Theorem \ref{th:strong} shows that $ss$-$TO$ is a $(\Delta,0,1)$-strongly stabilizing tree-orientation protocol. Lemma \ref{lem:round0} and \ref{lem:round1} guarantee that $ss$-$TO$ reaches a configuration of $\mathcal{LC}_0 \cup \mathcal{LC}_1$ from any initial configuration within $O(n)$ rounds. For the case with a single Byzantine process (say $z$), each subtree of $S-\{z\}$ may experience an illegitimate period (not satisfying the condition (C1) or (C2)) after such a configuration. However, Lemma \ref{lem:legit1} guarantees that the length of the illegitimate period is $O(d')$ where $d'$ is the diameter of the subtree. Since the total of diameters of all the subtrees in $S-\{z\}$ is $O(n)$, the total length of the periods that does not satisfy (C1) or (C2) is $O(n)$ rounds.
\end{proof}

\section{Concluding Remarks}

We introduced the notion of strong stabilization, a property that permits self-stabilizing protocols to contain Byzantine behaviors for tasks where strict stabilization is impossible. In strong stabilization, only the first Byzantine actions that are performed by a Byzantine process may disturb the system. If the Byzantine node does not execute Byzantine actions, but only correct actions, its existence remains unnoticed by the correct processes. So, by behaving properly, the Byzantine node may have the system disturbed arbitrarily far in the execution. By contrast, if the Byzantine node executes many Byzantine actions at the beginning of the execution, there exists a time after which those Byzantine actions have no impact on the system. As a result, the faster an attacker spends its Byzantine actions, the faster the system become resilient to subsequent Byzantine actions. An interesting trade-off appears: the more actually Byzantine actions are performed, the faster the stabilization of our protocols is (since the number of steps performed by correct processes in response to Byzantine disruption is independent from the number of Byzantine actions). Our work raises several important open questions:
\begin{enumerate}
\item is there a trade-off between the number of perturbations Byzantine nodes can cause and the containment radius ? In this paper, we strove to obtain optimal containment radius in strong stabilization, but it is likely that some problems do not allow strong stabilization with containment radius 0. It is then important to characterize the difference in containment radius when the task to be solved is ``harder'' than tree orientation or tree construction. 
\item is there a trade-off between the total number of perturbations Byzantine nodes can cause and the number of Byzantine nodes, that is, is a single Byzantine node more effective to harm the system than a team of Byzantine nodes, considering the same total number of Byzantine actions ? A first step in this direction was recently taken by~\cite{YMB10c}, where Byzantine actions are assumed to be upper bounded, for the (global) problem of leader election. Their result hints that only Byzantine actions are relevant, independently of the number of processes that perform them. It is thus interesting to see if the result still holds in the case of potentially infinite number of Byzantine actions.
\end{enumerate}

\singlespacing

%\bibliography{../../../biblio/biblio}
\bibliography{biblio}
\bibliographystyle{plain}

\end{document}